\newtheorem{teorema}{Theorem}[section]
\newtheorem{lema}[teorema]{Lemma}
\newtheorem{corolario}[teorema]{Corollary}
\newtheorem{proposicion}[teorema]{Proposition}
\theoremstyle{definition}
\newtheorem{remark}[teorema]{Remark}
\newtheorem{ejemplo}[teorema]{Example}
\def\A{\mathcal{A}}
\def\Har{\mbox{Har}\,}
\def\div{\mbox{div}\,}
\def\grad{\mbox{grad}\,}
\def\curl{\mbox{curl}\,}
\def\H{\mathbb{H}}
\def\C{\mathbb{C}}
\def\M{\mathfrak{M}}
\def\R{\mathbb{R}}
\def\V{\mathbf{V}}
\def\Sc{\mbox{Sc}\,}
\def\Vec{\mbox{Vec}\,}
\def\vecx{{\vec x}}
\def\vecy{{\vec y}}
\def\iff{\Leftrightarrow}
\def\Sol{\mbox{Sol}\,}
\def\Irr{\mbox{Irr}\,}
\def\Ti{T_{0,\Omega}}
\def\Tii{{\overrightarrow T}_{\!\!1,\Omega}}
\def\Tiii{\overrightarrow T_{\!\!2,\Omega}}
\begin{document}

\begin{center}
 {\Large General Solution of the Inhomogeneous\\
  Div-Curl System and Consequences}\\
 
\bigskip
Briceyda B. Delgado

R. Michael Porter

\bigskip
Departamento de Matem\'aticas, \\
Cinvestav-Quer\'etaro, Mexico

 \today

\end{center} 
 
\begin{abstract}
  We consider the inhomogeneous div-curl system (i.e.\ to find a
  vector field with prescribed div and curl) in a bounded star-shaped
  domain in $\R^3$. An explicit general solution is given in terms of
  classical integral operators, completing previously known results
  obtained under restrictive conditions. This solution allows us to solve
  questions related to the quaternionic main Vekua equation
  $DW=(Df/f)\overline W$ in $\R^3$, such as finding the vector part
  when the scalar part is known. In addition, using the general
  solution to the div-curl system and the known existence of the
  solution of the inhomogeneous conductivity equation, we prove the
  existence of solutions of the inhomogeneous double curl equation,
  and give an explicit solution for the case of static Maxwell's
  equations with only variable permeability.
\end{abstract}

\noindent \textbf{Keywords:} div-curl system, conductivity equation,
Maxwell's equations, double curl equation, quaternionic analysis,
monogenic function, hyperholomorphic function, hyperconjugate pair,
Vekua equation.

\noindent \textbf{Classification:} 35Q60 (35Q61 30G20 30G35 32A26 35F35 35J15)

\section{Introduction}

We will give a complete solution to the reconstruction of a
vector field from its divergence and curl, i.e., the system
\begin{align}\label{div_rot_system}
 \div\vec w &= g_0,  \nonumber\\
 \curl\vec w &= \vec g ,
\end{align}
for appropriate assumptions on the scalar field $g_0$ and the vector
field $\vec g$ and their domain of definition in three-space.

This first order partial differential system governs, for example,
static electromagnetic fields.  In fact, Maxwell's equations consist
of two simultaneous div-curl systems which describe how electrical and
magnetic fields are generated by charges and currents together with
their variations. Basic references to the theory of the classical
Maxwell's equations are \cite{Boss1998,Jackson1999}. Chapters 3 and 4 of
\cite{Krav2003} develop a quaternionic treatment for different systems
of Maxwell's equations, and \cite[Chapter 2]{KravShap1996} does this for
electrodynamical models.

The div-curl system has been studied from very many points of view.
In \cite{BDS1982} an existence result for a solution of the related
Moisil-Teodorescu equation $Dw=g$ was proved, and the div-curl problem
consists of finding a purely vectorial solution. Explicit solutions
have been found under diverse restrictive conditions, either on the
data $g_0$ and $\vec g$ (beyond the evident requirement that $\vec g$
be solenoidal) or on the domain.  For example, in \cite[Section
4]{Bergman1969} a particular div-curl system with $g_0=0$ and
$\vec g=0$ is examined. On the other hand, for a solenoidal vector
field, that is, for $g_0=0$, the Biot-Savart vector fields
\cite{Fein2005,Gr1998} give a particular solution.  In \cite[Chapter
5]{Jiang1998} a numerical solution is given for the div-curl system
under certain boundary conditions, based on the Least-Squares finite
element method.  Another important solution of the div-curl system is
given in the reference book \cite[p.\ 166]{Korn1968} based on the
Helmholtz Decomposition Theorem, representing the solution as an
integral operator over all of three-space; this formula is not
applicable for, say, a bounded domain. A solution for star-shaped
domains, based on a radial integral operator, was recently provided by
Yu.\ M.\ Grigor'ev in \cite[Th.\ 3.2]{Gri2014}, valid when the
original data $g_0, \vec g$ in the system (\ref{div_rot_system}) are
harmonic scalar and vectorial functions, respectively. Somewhat
earlier, Colombo et.\ al.\ \cite{CLSSS2012} produced a right inverse
of curl under the condition that certain functions lie in
the kernel of one of the components of the Teodorescu operator. This
permits expressing the general solution for (\ref{div_rot_system})
under the assumption that a certain scalar field admits a
hyperconjugate harmonic function.

The present work may be considered as a completion of the analysis in
\cite{CLSSS2012}.  We will show that in fact that the required
hyperconjugate harmonic function exists whenever $\vec g$ is
solenoidal.  As in \cite{CLSSS2012} we rely heavily on the classical
Teodorescu operator, and for that reason we begin in Section
\ref{sec:quaternions} by presenting the terminology in the language of
quaternionic analysis, which we will mix freely with the notation of
the classical operators on vector fields.  All results obtained in
this paper are also valid for functions that take values in the the
algebra ${\H}({\C})$ of biquaternions (complex quaternions) but for
simplicity we will work with the real quaternions ${\H}$.  In Section
\ref{sec:teodorescu} we study the components of the Teodorescu
transform, which we apply in Section \ref{sec:div-curl} to solve the
div-curl problem in Theorem \ref{theorem_div_rot} by first
constructing an explicit inverse to the curl, a result which is of
independent interest. With this inverse we solve the homogeneous
div-curl system (in which $g_0$ vanishes), and then follow
\cite{CLSSS2012} to show how to apply a correction to obtain the
solution for the inhomogeneous system.  In the remaining sections we
apply this solution to several related problems, including some
Dirichlet-type problems, the conductivity equation, the main Vekua
equation, and the double curl-type equation, the latter of which is then
used in a fundamental way for solving the static Maxwell's equations
with variable permeability (system (\ref{system_4}) below).
To make the work self-contained and to highlight the beauty of
the interrelationships involved, we have included proofs of many
facts which can be found elsewhere.

The authors are pleased to express their gratitude to V.~V.~Kravchenko for
his extremely valuable suggestions and encouragement, without which
this work would have been impossible.

\section{Quaternionic analysis focused on
  $\R^3$\label{sec:quaternions}}

The multiplicative unit of the non-commutative algebra $\H$ of
quaternions is denoted $e_0=1$, while the nonscalar units are
$e_1,e_2,e_3$.  We generally consider an element
$x=x_0+\sum_{i=1}^3{e_i x_i}\in\H$ ($x_i\in\R$) to be decomposed as
$x=\Sc x + \Vec x$, where $\Sc x=x_0$; thus we have a direct sum
$\H=\Sc\H\otimes\Vec\H$. From now on, we will freely identify $\Sc\H$,
$\Vec\H$ with the real numbers $\R$ and Euclidean space $\R^3$
respectively.  Thus we have function spaces such as
$C^r(\Omega,\R),C^r(\Omega,\R^3)\subseteq C^r(\Omega,\H)$ for a domain
$\Omega$, which will always be contained in $\R^3$.  We largely follow
the notation in \cite{GuHaSpr2008}.

\subsection{Monogenic functions}

From now on $\vecx \in\R^3$.  The Moisil-Teodorescu differential operator
$D$ (also known as the Cauchy-Riemann or occasionally the Dirac
operator) is defined by
\begin{align}\label{operador_Dirac}
  D = e_1\frac{\partial}{\partial x_1} + e_2\frac{\partial}{\partial x_2} +
      e_3\frac{\partial}{\partial x_3}.
\end{align}
As $D$ may be applied from both the left and right sides, we write out
for clarity that for with a scalar function $w_0(\vecx)$, and a vectorial
function $\vec w(\vecx)$, 
\begin{align}\label{eq:Dvec}
   Dw_0 &= w_0D = \grad w_0,  \nonumber\\
   D\vec w &= -\div\vec w+\curl\vec w,
   \quad  \vec wD = -\div\vec w-\curl\vec w,
\end{align}
expressing $D$ in terms of the gradient $\nabla$, the divergence
$\nabla\cdot$ and the curl (or rotational) $\nabla\times$. Thus for
$w=w_0+\vec w$ the left and right operators are
\begin{align} 
  Dw &= Dw_0 + D\vec w = -\div\vec w + \grad w_0 + \curl\vec w, \nonumber\\
  wD &= Dw_0 - D\vec w = -\div\vec w + \grad w_0 - \curl\vec w.  \label{formula_Dirac}
\end{align}

The following \cite{GuSpr1990,GuSpr1997} is a generalization of the Leibniz rule:
\begin{align}\label{regla_Leibniz}
   D[vw]=D[v] w+\overline{v} D[w]+2 (\Sc(vD))[w],
\end{align}
where we write  
\[ (\Sc(vD))[w] =-\sum_{i=1}^3{v_i \partial_i w}
\] and $\overline{x}=\Sc x-\Vec x$ denotes quaternionic conjugation.
When $\Vec v=0$, this simplifies to
\[
D[vw]=D[v]w+vD[w].
\]
 
Let $\Omega\subseteq\R^3$ be an open subset.  A function
$w\in C^1(\Omega,\H)$ is called \textit{left-monogenic} (respectively
\textit{right-monogenic}) in $\Omega$ when $Dw=0$ (respectively
$wD=0$) and we write $\M(\Omega)=\mathfrak{M}(\Omega,\H)$ and
$\M^r(\Omega)=\M^r(\Omega,\H)$ for the spaces of left-monogenic and
right-monogenic functions. The unqualified term ``monogenic'' will
refer to left-monogenic functions; the term ``hyperholomorphic'' is
also commonly used.  By (\ref{formula_Dirac}),
\begin{equation}\label{monogenicas_izquierda}
 w\in\M(\Omega) \iff \left\{
    \parbox{.28\textwidth}{\vspace{-3ex}\begin{eqnarray*}
        \div\vec w \!\!&=&\!\! 0,\\ 
   \grad w_0  \!\!&=&\!\! -\curl\vec w .
   \end{eqnarray*} \vspace{-4ex}}
\right. 
\end{equation}

One sometimes says that $w_0, \vec w$ form a hyperconjugate
pair. We write
$\Har(\Omega,A)=\{w\colon\Omega\to A,\ \Delta w=0\}$, where $A=\R$,
$\R^3$ or $\H$, for the corresponding sets of harmonic functions
defined in $A$.  Since the Laplacian of a scalar function is obtained
by $\Delta w_0=-D^2w_0$, left and right monogenic functions are
harmonic.
  When both $Dw=0$ and $wD=0$, $w$ is called a \textit{monogenic
  constant}.  By (\ref{formula_Dirac}), $w$ is a monogenic constant if
and only if $w_0$ is constant and $\vec w$ satisfies $\div\vec w=0$
and $\curl\vec w=0$.  If $w\in\M(\Omega)$ with $\Sc w=0$ or
$\Vec w=0$, then $w$ is a monogenic constant. From this it can be seen
that the space
$\M^c(\Omega)=\M^c(\Omega,\H)=\M(\Omega)\cap\M^r(\Omega)$ of monogenic
constants in $\Omega$ can be decomposed as
\[ \M^c(\Omega) = \R\oplus \overrightarrow{\mathfrak{M}}(\Omega),
\]
where
\begin{align}\label{Si_vector_field}
  \overrightarrow{\M}(\Omega) =
 \Sol(\Omega,\R^3) \cap \Irr(\Omega,\R^3), 
\end{align}
with 
\begin{align*}
  \Sol(\Omega,\R^3) &= \{\vec w\colon\ \div\vec w=0 \text{ in } \Omega\}
    \subseteq C^1(\Omega,\R^3),\\
  \Irr(\Omega,\R^3) &= \{\vec w\colon\ \curl\vec w=0 \text{ in } \Omega\}
    \subseteq C^1(\Omega,\R^3).
\end{align*}
Elements of $\Sol(\Omega,\R^3)$ are called solenoidal (or
incompressible, or divergence free) fields, while elements of
$\Irr(\Omega,\R^3)$ are called irrotational vector fields.  Elements
of $\overrightarrow{\M}(\Omega)$ are called SI-vector fields and are
studied in \cite{GLS2010,Shapiro1997,Stein1971}.  Locally they are
gradients of real valued harmonic functions.  

\subsection{Standard integral operators}

The operators and results in this subsection are all well known.  Let
$\vec g=g_1e_1+g_2e_2+g_3e_3$ be a vector field such that
$\curl\vec g=0$.  Define \cite{Krav2009,KravTremb2011}
\begin{align}\label{antigradient}
  \A[\vec g](x_1,x_2,x_3) &= 
    \int_{a_1}^{x_1}{g_1(t,a_2,a_3) \,dt} +
    \int_{a_2}^{x_2}{g_2(x_1,t,a_3) \, dt} \nonumber\\
   & \ \ + \int_{a_3}^{x_3}{g_3(x_1,x_2,t) \,dt}.
\end{align}
Then the scalar function $\psi=\A[\vec g]$ is a potential (or
antigradient) for $\vec g$; i.e.\ $\grad\psi=\vec g$. Since potentials are
defined up to an arbitrary additive constant, this local definition 
can be extended to give
$\A\colon\Irr(\Omega,\R^3)\to C^2(\Omega,\R)$ whenever $\Omega$ is
simply connected. 

It is also well known \cite{Sud1979} that every real-valued harmonic function is
the scalar part of a monogenic function; conversely, the condition for
completing a vector part to a hyperconjugate pair is for $\vec w$ to
be harmonic and solenoidal:
\begin{proposicion} \label{prop:completingvecw} Let
  $\vec w\in\Har(\Omega,\R^3)$ where $\Omega$ is simply connected. A
  necessary and sufficient condition for there to exist
  $w\in\M(\Omega)$ such that $\Vec w=\vec w$ is that $\div\vec w=0$.
\end{proposicion}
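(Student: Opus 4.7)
The plan is to use the characterization of left-monogenic functions given in (\ref{monogenicas_izquierda}): $w = w_0 + \vec w \in \M(\Omega)$ if and only if $\div \vec w = 0$ and $\grad w_0 = -\curl \vec w$. Necessity is immediate from (\ref{monogenicas_izquierda}), so the real content lies in sufficiency: given a harmonic solenoidal vector field $\vec w$, I must produce a scalar field $w_0$ with $\grad w_0 = -\curl \vec w$.

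For sufficiency, the key step is to verify that the field $-\curl \vec w$ is irrotational, so that the antigradient operator $\A$ can be applied to it. I would compute
\begin{equation*}
\curl(\curl \vec w) = \grad(\div \vec w) - \Delta \vec w,
\end{equation*}
and observe that the first term vanishes by the hypothesis $\vec w \in \Sol(\Omega,\R^3)$, while the second vanishes because $\vec w \in \Har(\Omega,\R^3)$. Hence $-\curl \vec w \in \Irr(\Omega,\R^3)$.

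With $\Omega$ simply connected, I may then invoke the construction of the antigradient described just above the statement: set $w_0 = \A[-\curl \vec w]$, which yields a scalar function satisfying $\grad w_0 = -\curl \vec w$ (globally on $\Omega$, up to the expected additive constant, by simple connectedness). Defining $w = w_0 + \vec w$, both conditions in (\ref{monogenicas_izquierda}) are satisfied, so $w \in \M(\Omega)$ with $\Vec w = \vec w$, as required.

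I do not foresee any genuine obstacle here; the argument is essentially a two-line verification once one recognizes the vector identity $\curl\curl = \grad\div - \Delta$. The only subtlety worth flagging is the role of simple connectedness, which is needed solely to promote the local antigradient formula (\ref{antigradient}) to a globally defined potential on $\Omega$.
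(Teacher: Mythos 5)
Your proof is correct and follows essentially the same route as the paper's: necessity from (\ref{monogenicas_izquierda}), and sufficiency by verifying $\curl\curl\vec w = \grad\div\vec w - \Delta\vec w = 0$ so that $w_0 = -\A[\curl\vec w]$ supplies the required hyperconjugate scalar part. Your extra remarks on the role of simple connectedness are accurate but add nothing beyond what the paper's argument already implicitly uses.
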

 
\begin{proof}
  The necessity is given by (\ref{monogenicas_izquierda}). To prove
  the sufficiency, let $\vec w$ be solenoidal. Then
  $\curl\curl\vec w = \grad\div\vec w-\Delta\vec w = 0$, where
  $\Delta\vec w$ is the Laplacian applied to each component of the
  vector field.  Thus we can define $w_0=-\A[\curl\vec w]$ so that
  $\curl \vec w=-\grad w_0$ as required by
  (\ref{monogenicas_izquierda}).
\end{proof}

The radial moment operator, applicable to $\R^n$-valued functions
in general, is
\begin{align}  \label{eq:iradial}
  I^\alpha[w](\vecx) = \int_0^1 t^\alpha w(t\vecx)\,dt
\end{align}
in star-shaped domains, where usually $\alpha>-1$. Via relations such
as $\partial w_0(t\vecx)/\partial t=\vecx \cdot \grad w_0(t\vecx)$ one
verifies the following \cite{Gri2014}.

\begin{lema} \label{lemm:iradial} $\div I^\alpha = I^{\alpha+1}\div$;
  $\grad I^\alpha = I^{\alpha+1}\grad$;
  $\curl I^\alpha = I^{\alpha+1}\curl$;
  $\Delta I^\alpha = I^{\alpha+2}\Delta$;
  $\vecx \cdot I^{\alpha+1}[\vec w]=I^\alpha[ \vecx \cdot\vec w]$;
  $\vecx \times I^{\alpha+1}[\vec w]=I^\alpha[ \vecx \times\vec w]$;
  $I^\alpha[(\vecx \cdot \grad) w]=(\vecx \cdot \grad
	)I^\alpha[w]$ and
\[ I^\alpha[(\vecx \cdot\grad) w] = w -(\alpha+1)I^\alpha[w]. \]
\end{lema}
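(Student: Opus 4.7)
The proof proposal is that every identity in the lemma reduces, after a short calculation, to the chain rule
\[
\partial_{x_i}\!\bigl[w(t\vecx)\bigr]=t\,(\partial_i w)(t\vecx)
\]
together with differentiation under the integral sign on the interval $[0,1]$. My plan is to group the eight identities by the mechanism that proves them, rather than handle them one by one.

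First I would dispatch identities 1--4 uniformly. Applied to $I^\alpha[w](\vecx)=\int_0^1 t^\alpha w(t\vecx)\,dt$, any first-order constant-coefficient differential operator $L=\sum c_i\partial_i$ (so $L$ can be $\div$, $\grad$, or $\curl$ read componentwise) commutes with the integral, and the chain rule contributes one extra factor of $t$ per differentiation. Thus $L\,I^\alpha[w]=\int_0^1 t^{\alpha+1}(Lw)(t\vecx)\,dt=I^{\alpha+1}[Lw]$. Applying this twice (or using $\Delta=\div\grad$) gives identity 4 with $\alpha+2$.

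Next I would dispose of identities 5 and 6, which are purely algebraic. They rest on the observation $x_i\cdot t^{\alpha+1}=t^\alpha\cdot(tx_i)$, so that the factor $\vecx$ (whether used scalarly or via cross product) can be absorbed into the argument of $w$:
\[
\vecx \cdot I^{\alpha+1}[\vec w](\vecx)
 =\int_0^1 t^{\alpha+1}\vecx\cdot\vec w(t\vecx)\,dt
 =\int_0^1 t^{\alpha}(t\vecx)\cdot\vec w(t\vecx)\,dt
 =I^{\alpha}[\vecx\cdot\vec w](\vecx),
\]
and identically for $\vecx\times$. Identity 7 is essentially the same: differentiating under the integral, $(\vecx\cdot\grad)I^\alpha[w](\vecx)=\int_0^1 t^{\alpha+1}\sum_i x_i(\partial_i w)(t\vecx)\,dt$, and the integrand is exactly $t^\alpha\bigl[(\vecx\cdot\grad)w\bigr](t\vecx)$ because that gradient must be evaluated at $t\vecx$, producing the compensating factor $t$.

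The only identity requiring a genuine trick is the last one, and this is where I would expect any real work. The key observation is
\[
\bigl[(\vecx\cdot\grad)w\bigr](t\vecx)=t\,\frac{d}{dt}w(t\vecx),
\]
so that $I^\alpha[(\vecx\cdot\grad)w](\vecx)=\int_0^1 t^{\alpha+1}\frac{d}{dt}w(t\vecx)\,dt$. An integration by parts yields the boundary term $t^{\alpha+1}w(t\vecx)\big|_0^1=w(\vecx)$ (the lower endpoint vanishes because $\alpha>-1$) minus $(\alpha+1)\int_0^1 t^\alpha w(t\vecx)\,dt$, giving $w-(\alpha+1)I^\alpha[w]$ as required. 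The main obstacle is simply keeping straight the distinction between $(\vecx\cdot\grad)$ acting on the composed function $w(t\vecx)$ (which brings out a factor $t$) and the same operator evaluated pointwise at $t\vecx$ (which brings out a factor $t$ with a different sign of bookkeeping), and making sure the hypothesis $\alpha>-1$ is invoked at the integration-by-parts step so that the boundary term at $t=0$ vanishes.
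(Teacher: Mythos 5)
Your proof is correct and follows exactly the route the paper indicates: the paper states the lemma without proof, remarking only that it is verified ``via relations such as $\partial w_0(t\vec x)/\partial t=\vec x\cdot\grad w_0(t\vec x)$'' and citing Grigor'ev, and your computations (chain rule plus differentiation under the integral for the first seven identities, integration by parts with the boundary term at $t=0$ killed by $\alpha>-1$ for the last) are precisely the verification being alluded to. No gaps.
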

A further property we will need is
$I^\alpha[\vecx \cdot \curl \vec w]=\vecx \cdot \curl I^{\alpha}[\vec w]$,
which yields
$I^\alpha[\vecx \cdot \Vec D\vec w]=\vecx \cdot \Vec DI^{\alpha}[\vec w]$.

The \textit{monogenic completion operator}
$\vec S_{\Omega}\colon\Har(\Omega,\R)\to\Har(\Omega,\R^3)$ is the composition
\begin{align}\label{def:operator_completation_monogenic}
\vec S_{\Omega} = I^0[\Vec \vecx D]
\end{align} 
for star-shaped open sets $\Omega$ with respect to the origin.
(Recall that $Du$ is vectorial for scalar valued $u$; we have written
$\vecx D$ for the operator $(\vecx D)[u](\vecx)=\vecx Du(\vecx)$, which
involves a quaternionic multiplication.)  Explicitly this is
\[   \vec S_{\Omega}[w_0](\vecx) = \Vec\left(\int_0^1 t\vecx  D w_0(t\vecx)  \,dt\right) =
   \int_0^1 t \vecx \times  \nabla w_0(t\vecx)   \,dt, \quad \vecx \in \Omega.
\]
  When $\Omega$ is star-shaped with respect to some other
point, the definition of $\vec S_{\Omega}$ is adjusted by shifting the
values of $\vecx $ accordingly.  Versions of $\vec S_\Omega$ in $\R^n$ can
be found in greater generality in \cite{BDSH2006} and \cite[Sect.\
2.1.5]{GuSpr1997}; we give the proof of the following here for
completeness, modifying slightly the argument which was given in
\cite{Sud1979} for functions in domains in $\H$.

\begin{proposicion}\label{completacion_monogenicas} 
  Let $\Omega\subseteq\R^3$ be a star-shaped open set. The operator
  $\vec S_{\Omega}$ sends $\Har(\Omega,\R)$ to
  $\Har(\Omega,\R^3)$. For every real-valued harmonic function
  $w_0\in\Har(\Omega,\R)$,
\[ w_0+ \vec S_{\Omega}[w_0] \in \M(\Omega) .\]
Thus there is a monogenic function $w$ such that $\Sc w=w_0$.
\end{proposicion}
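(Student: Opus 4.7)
The plan is to reduce the statement to checking two scalar identities: in view of the characterization (\ref{monogenicas_izquierda}), the function $w=w_0+\vec S_{\Omega}[w_0]$ lies in $\M(\Omega)$ iff
\[
\div\vec S_{\Omega}[w_0]=0 \quad\text{and}\quad \curl\vec S_{\Omega}[w_0]=-\grad w_0.
\]
Harmonicity of $\vec S_{\Omega}[w_0]$ is then automatic, since every left-monogenic function is harmonic componentwise and $\Delta w_0=0$ is assumed. So the whole proof reduces to computing the divergence and curl of $\vec S_{\Omega}[w_0]=I^0[\vec F]$, where $\vec F(\vecx)=\vecx\times\grad w_0(\vecx)$, and exploiting Lemma~\ref{lemm:iradial}.

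The divergence is the easy half. By Lemma~\ref{lemm:iradial}, $\div I^0[\vec F]=I^1[\div\vec F]$, and the standard identity $\div(\vec a\times\vec b)=\vec b\cdot\curl\vec a-\vec a\cdot\curl\vec b$ with $\vec a=\vecx$, $\vec b=\grad w_0$ gives $\div\vec F=0$ because $\curl\vecx=0$ and $\curl\grad w_0=0$. Thus $\div\vec S_{\Omega}[w_0]=0$.

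The curl is the crux. Again by Lemma~\ref{lemm:iradial}, $\curl\vec S_{\Omega}[w_0]=I^1[\curl\vec F]$. Apply the vector identity
\[
\curl(\vec a\times\vec b)=\vec a\,\div\vec b-\vec b\,\div\vec a+(\vec b\cdot\grad)\vec a-(\vec a\cdot\grad)\vec b
\]
with $\vec a=\vecx$, $\vec b=\grad w_0$. Now $\div\vecx=3$, $(\grad w_0\cdot\grad)\vecx=\grad w_0$, and crucially $\div\grad w_0=\Delta w_0=0$ because $w_0$ is harmonic; this kills the troublesome $\vecx\Delta w_0$ term. One obtains
\[
\curl\vec F=-2\grad w_0-(\vecx\cdot\grad)\grad w_0.
\]
Next use the elementary Leibniz-type identity $(\vecx\cdot\grad)\grad w_0=\grad((\vecx\cdot\grad)w_0)-\grad w_0$ (verified componentwise) to rewrite the right-hand side as $-\grad w_0-\grad((\vecx\cdot\grad)w_0)$.

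Finally, apply $I^1$, using $I^1\grad=\grad I^0$ (the commutation relation from Lemma~\ref{lemm:iradial}) together with the Euler-type identity $I^0[(\vecx\cdot\grad)w_0]=w_0-I^0[w_0]$ listed in that same lemma. These collapse the expression:
\[
\curl\vec S_{\Omega}[w_0] = -\grad I^0[w_0]-\grad\bigl(w_0-I^0[w_0]\bigr) = -\grad w_0,
\]
as required. The only real obstacle is step three: recognizing that harmonicity of $w_0$ is exactly what is needed to annihilate the bad term in $\curl(\vecx\times\grad w_0)$, and then matching the remaining $(\vecx\cdot\grad)\grad w_0$ with the Euler identity in Lemma~\ref{lemm:iradial} so that the two surviving pieces cancel. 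Once those bookkeeping lines are in place, monogenicity is immediate and harmonicity of $\vec S_{\Omega}[w_0]$ follows from $\Delta=-D^2$.
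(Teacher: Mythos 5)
Your proof is correct, but it follows a genuinely different route from the paper's. The paper does not separate scalar and vector parts: using the same Euler identity $I^0[(\vecx\cdot\grad)w_0]=w_0-I^0[w_0]$ that you apply at the very end, it first rewrites $w=w_0+\vec S_{\Omega}[w_0]$ as $-I^0[D[w_0]\vecx]+I^0[w_0]$, then applies $D$ under the integral sign and uses the quaternionic Leibniz rule (\ref{regla_Leibniz}) to obtain $D_{\vecx}\left(D_{\vecx}w_0(t\vecx)\,\vecx\right)=D_{\vecx}w_0(t\vecx)$, after which the two integrals cancel and $Dw=0$ in one stroke. You instead verify the equivalent system (\ref{monogenicas_izquierda}) componentwise, replacing the quaternionic Leibniz rule by the classical identities for $\div(\vec a\times\vec b)$ and $\curl(\vec a\times\vec b)$ together with the commutation relations of Lemma \ref{lemm:iradial}. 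The decisive cancellation is the same in both arguments --- harmonicity of $w_0$ annihilates the term $\vecx\,\Delta w_0$, which in the paper's computation appears as $t\Delta_{\vecx}w_0(t\vecx)$ inside the Leibniz expansion --- but your organization makes explicit that the divergence condition comes for free and that harmonicity plus the Euler identity are needed only for the curl condition, at the cost of a longer computation; the paper's version is shorter, stays purely quaternionic, and is closer to the form that generalizes to higher-dimensional settings. All your intermediate identities check out, including $(\vecx\cdot\grad)\grad w_0=\grad\bigl((\vecx\cdot\grad)w_0\bigr)-\grad w_0$ and the final collapse of $I^1[\curl(\vecx\times\grad w_0)]$ to $-\grad w_0$, and deducing $\vec S_{\Omega}[w_0]\in\Har(\Omega,\R^3)$ from monogenicity via $\Delta=-D^2$ is exactly how the paper disposes of that clause as well.
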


\begin{proof} 
 Let $w(\vecx) = w_0(\vecx) + \vec S_{\Omega}[w_0](\vecx)$.  Then since
$\Sc \vecx D[w_0]=-\vecx \cdot\grad w_0$, by Lemma \ref{lemm:iradial} we have
$ \Sc I^0[-\vecx D[w_0]] = w_0 -I^0[w_0]$, 
so (\ref{def:operator_completation_monogenic}) says
\begin{align*}
 w &= -I^0[D[w_0]\vecx] + I^0[w_0]\\ 
   &=\int_0^1 -tDw_0(t\vecx)\vecx  \,dt + \int_0^1 w_0(t\vecx) \,dt, 
\end{align*}
when $D[w_0]\vecx$ means the quaternionic multiplication $D[w_0](\vecx)\vecx$.
We apply $D$ and change the order of integration and derivation since
$w_0$ and $Dw_0$ have continuous partial derivatives in $\Omega$:
\begin{align}\label{eq.regla_cadena}
 (Dw)(\vecx) = \int_0^1 -t D_{\vecx}(D_{\vecx}w_0(t\vecx)\vecx) \,dt + \int_0^1 D_{\vecx}[w_0(t\vecx)] \,dt.
\end{align}
The subscript in $D_{\vecx}$ is the variable with respect to which we apply
the operator.  Using the Leibniz formula (\ref{regla_Leibniz}),
\begin{align*}
  D_{\vecx}\left(D_{\vecx}w_0(t\vecx) \vecx \right)
  &= -\Delta_{\vecx} (w_0(t\vecx))\vecx  + \overline{D_{\vecx} w_0(t\vecx)}D\vecx  -
    2\sum_{i=1}^3{\partial_i w_0(t\vecx)\partial_i {\vecx}}\\
  &= t\Delta_{\vecx} w_0(t\vecx)+3D_{\vecx}w_0(t\vecx)-2D_{\vec x}w_0(t\vecx)\\
  &= D_{\vecx}w_0(t\vecx)
\end{align*}
since $w_0$ is harmonic. 
Finally, since the second integrand in (\ref{eq.regla_cadena}) is
$tD_{\vecx}w_0(t\vecx)$, we conclude $Dw=0$ as required.
\end{proof}

Henceforth $\Omega\subseteq\R^3$ will always be a bounded domain.
  For a bounded function $w\in C(\Omega,\H)$, we can
define the volume integral
 \begin{align}\label{inverso_Laplaciano}
   L[w](\vecx) = -\int_{\Omega} \frac{w(\vec y )}{4\pi|\vec y -\vecx |} \,d\vecy ,
   \quad \vecx  \in \Omega,
\end{align}
while for $w\in C(\partial\Omega,\H)$ the \textit{single-layer
  potential} \cite[p.\ 38]{Colton1992} is the surface integral
\begin{align}\label{single_layer_operator}
    M[w](\vecx)=\int_{\partial\Omega} \frac{w(\vec y )}{4\pi|\vec y -\vecx |} \,ds_{\vecy} ,\quad
     \vecx \in \R^3\setminus \partial\Omega.
\end{align}
The \textit{Cauchy kernel} is the vector field
\begin{align*}
  E(\vecx) = \frac{\overline{\vecx }}{4\pi|\vecx |^3}, \quad \vecx  \in \R^3-\{0\}, 
\end{align*}
which is a monogenic constant.  For bounded $w\in C(\Omega,\H)$, the \textit{Teodorescu
  transform} of $w$ is defined by
\begin{align}\label{operador_Teodorescu}
T_{\Omega}[w](\vecx) =-\int_{\Omega} E(\vec y -\vecx) w(\vec y ) \,d\vecy, \quad \vecx \in\R^3.
\end{align}

\begin{proposicion}[{\cite[Prop.\ 2.4.2]{GuSpr1990}}]\label{prop:Teodorescu}  
      Let $w\in C(\Omega,\H)$ be bounded. Then
  $T_\Omega(w)\in C^1(\Omega,\H)$.  Further, the Teodorescu
 transform acts as the right inverse
  operator of $D$:
    \[ DT_\Omega[w]=w. \]  
\end{proposicion}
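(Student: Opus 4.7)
The plan is to exploit the fact that the Cauchy kernel $E(\vec z)=\overline{\vec z}/(4\pi|\vec z|^{3})$ is, up to sign, the fundamental solution of $D$: a direct verification shows $DE=0=ED$ on $\R^{3}\setminus\{0\}$ (so $E$ is a monogenic constant there, as stated in the excerpt), while distributionally $DE$ carries a Dirac mass at the origin. Thus $T_{\Omega}$ is a convolution with an approximate fundamental solution of $D$, and $DT_{\Omega}[w]$ should recover $w$ on $\Omega$ while being zero outside $\overline{\Omega}$.

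First I would observe that $|E(\vec y-\vecx)|=(4\pi|\vec y-\vecx|^{2})^{-1}$ has an integrable singularity in $\R^{3}$, so for bounded $w$ the integral in (\ref{operador_Teodorescu}) converges absolutely for every $\vecx\in\R^{3}$ and defines a function continuous in $\vecx$ by dominated convergence. For the identity $DT_{\Omega}[w]=w$ on $\Omega$, fix $\vecx_{0}\in\Omega$, choose $\epsilon>0$ with $\overline{B_{\epsilon}(\vecx_{0})}\subset\Omega$, and split
\[
T_{\Omega}[w](\vecx)=T_{1}(\vecx)+T_{2}(\vecx),
\]
where $T_{1}$ integrates over $\Omega\setminus B_{\epsilon}(\vecx_{0})$ and $T_{2}$ over $B_{\epsilon}(\vecx_{0})$. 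On a neighborhood of $\vecx_{0}$ the kernel of $T_{1}$ is $C^{\infty}$, so differentiation under the integral is legal; using the chain rule $D_{\vecx}E(\vec y-\vecx)=-(DE)(\vec y-\vecx)=0$ (with Leibniz (\ref{regla_Leibniz}) as $w(\vec y)$ is constant in $\vecx$), one gets $DT_{1}(\vecx_{0})=0$. For $T_{2}$, continuity of $w$ lets us write $w(\vec y)=w(\vecx_{0})+r_{\epsilon}(\vec y)$ with $\|r_{\epsilon}\|_{\infty,B_{\epsilon}(\vecx_{0})}\to 0$. The constant piece reduces to $w(\vecx_{0})\,DT_{B_{\epsilon}(\vecx_{0})}[1](\vecx_{0})$, which one computes by Stokes' theorem (the volume integral becomes a surface integral of $E$ over $\partial B_{\epsilon}(\vecx_{0})$, which evaluates to $1$ by radial symmetry), while the remainder is bounded by $C\|r_{\epsilon}\|_{\infty}\to 0$. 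Passing to the limit gives $DT_{\Omega}[w](\vecx_{0})=w(\vecx_{0})$.

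The main obstacle is that one cannot simply differentiate $T_{\Omega}[w]$ under the integral sign near the singularity: the formal derivative kernel $\partial_{x_{i}}E(\vec y-\vecx)$ is of order $|\vec y-\vecx|^{-3}$, which is not integrable in $\R^{3}$. The localization into a smooth-kernel piece $T_{1}$ plus a near-singularity piece $T_{2}$, together with the approximation $w(\vec y)\approx w(\vecx_{0})$ in $B_{\epsilon}(\vecx_{0})$, is what sidesteps this difficulty, concentrating the singular behavior into a scalar multiple of $T_{B_{\epsilon}(\vecx_{0})}[1]$ whose derivative can be computed exactly by Stokes. The continuous differentiability $T_{\Omega}[w]\in C^{1}(\Omega,\H)$ then follows by carrying out the same decomposition at each $\vecx_{0}\in\Omega$ with uniform-in-$\vecx_{0}$ control of the estimates, so that each partial derivative is a locally uniform limit of continuous functions.
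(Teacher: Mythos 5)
The paper does not prove this proposition at all---it is quoted verbatim from G\"urlebeck--Spr\"o\ss{}ig \cite[Prop.\ 2.4.2]{GuSpr1990}---so your attempt is measured against the standard argument rather than an internal one. Your outline (localize around the singularity, use that $E$ is two-sided monogenic away from $0$ so the far part contributes nothing, freeze $w$ at $\vecx_0$ and compute the frozen constant's contribution exactly) is the classical strategy, and the parts you actually carry out are sound: $|E(\vec y-\vecx)|=(4\pi|\vec y-\vecx|^2)^{-1}$ is locally integrable, $DT_1(\vecx_0)=0$ is legitimate, and $DT_{B_\epsilon(\vecx_0)}[1]\equiv 1$ is correct (indeed $T_{B_\epsilon(\vecx_0)}[1](\vecx)=-(\vecx-\vecx_0)/3$ inside the ball). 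A small point: the constant exits the integral on the \emph{right} of the kernel, so the frozen piece is $\bigl(DT_{B_\epsilon}[1]\bigr)\,w(\vecx_0)$ rather than $w(\vecx_0)\,DT_{B_\epsilon}[1]$; harmless here only because $DT_{B_\epsilon}[1]$ is the real scalar $1$.

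The genuine gap is the sentence ``the remainder is bounded by $C\|r_\epsilon\|_\infty$.'' That is exactly the bound the kernel cannot deliver: $DT_{B_\epsilon}[r_\epsilon](\vecx_0)$ would be dominated by $\int_{B_\epsilon}|\partial_{x_i}E(\vec y-\vecx_0)|\,|r_\epsilon(\vec y)|\,d\vecy$, and $\int_{B_\epsilon}|\vec y-\vecx_0|^{-3}\,d\vecy$ diverges---the very non-integrability you flag one paragraph earlier (a difference-quotient version of the estimate fares no better: it produces a factor $\log(1/|h|)$). Worse, even the \emph{existence} of the derivative of $T_{B_\epsilon}[r_\epsilon]$ at $\vecx_0$ is a problem of precisely the same type as the one being solved, so the decomposition has reduced nothing. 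The argument closes only if $r_\epsilon(\vec y)=w(\vec y)-w(\vecx_0)$ has a Dini or H\"older modulus, $|r_\epsilon(\vec y)|\le C|\vec y-\vecx_0|^\alpha$, which renders $\int_{B_\epsilon}|\vec y-\vecx_0|^{-3+\alpha}\,d\vecy$ finite; that is the H\"older--Korn--Lichtenstein argument and it proves the statement for $w\in C^{0,\alpha}$. For merely continuous bounded $w$ the classical conclusion is actually too strong: by Proposition \ref{operators_T1T2T3}, for scalar $w_0$ one has $T_\Omega[w_0]=-\nabla L[w_0]$, so $T_\Omega[w_0]\in C^1(\Omega,\H)$ would force the Newtonian potential $L[w_0]$ to be $C^2$, which is known to fail for suitable continuous densities. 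The identity $DT_\Omega[w]=w$ must then be read in the Sobolev/distributional sense (as the paper itself indicates at the end of Section \ref{sec:teodorescu}), obtained either from the Calder\'on--Zygmund theory of the singular integrals $\partial_{x_i}T_\Omega$ or by testing against $C_0^\infty$ functions and using that $E$ is a fundamental solution of $D$; your proof as written establishes neither.
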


  \section{Components of the Teodorescu operator
\label{sec:teodorescu}}

As a preliminary to providing the general solution to the div-curl system
(\ref{div_rot_system}) in Theorem \ref{theorem_div_rot} below, we begin by
analyzing the elements which form the Teodorescu operator. The following
operators were introduced in \cite{CLSSS2012}:
\begin{align}\label{T1_T2_T3}
 \Ti[\vec w](\vecx) &=  \int_{\Omega} E(\vecy -\vecx) \cdot \vec w(\vecy ) \,d\vecy, \nonumber\\
 \Tii[w_0](\vecx)    &= -\int_{\Omega} w_0(\vecy ) E(\vecy -\vecx) \,d\vecy,  \nonumber\\
 \Tiii[\vec w](\vecx) &= -\int_{\Omega} E(\vecy -\vecx) \times \vec w(\vecy ) \,d\vecy,
\end{align}
where $\cdot$ denotes the scalar (or inner) product of vectors and
$\times$ denotes the cross product.  Note that $\Tii$ acts
on $\R$-valued functions, while $\Ti $, $\Tiii $ act
on $\R^3$-valued functions, and $\Ti $ produces scalar-valued
functions. 
Furthermore,
\begin{align}\label{decomposition_T}
  T_{\Omega}[w_0+\vec w] = \Ti[\vec w] + \Tii[w_0]+\Tiii[\vec w].
\end{align}
This is an expression of the quaternionic multiplication formula
$\vec a b= -\vec a\cdot\vec b + b_0\vec a + \vec a\times \vec b$.

The first statement of the following was noted in \cite[Prop.\ 3.8]{CLSSS2012}.

\begin{proposicion}\label{scalar_part_harmonic}
  Suppose that $\vec w$ is bounded.  
 (i) $\Ti[\vec w]\in \Har(\Omega,\R)$ if and only if
  $\vec w\in \Sol(\Omega,\R^3)$;
 (ii) $\Tiii[\vec w]\in\Har(\Omega,\R^3)$ if and only if
  $\vec w\in \Irr(\Omega, \R^3)$.
\end{proposicion}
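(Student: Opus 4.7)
The plan is to reduce both biconditionals to a single computation, namely the two identities
\[
\Delta \Ti[\vec w] = \div \vec w, \qquad \Delta \Tiii[\vec w] = -\curl \vec w.
\]
Granted these, part (i) reads $\Ti[\vec w]\in\Har(\Omega,\R) \iff \Delta\Ti[\vec w]=0 \iff \div\vec w=0 \iff \vec w\in\Sol(\Omega,\R^3)$, and part (ii) reads analogously with $\curl$ in place of $\div$.

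To derive the identities I would first observe that because $\vec w$ is purely vectorial, the decomposition (\ref{decomposition_T}) collapses to $T_\Omega[\vec w] = \Ti[\vec w] + \Tiii[\vec w]$ (the $\Tii$-term is absent, since it acts only on scalars). By Proposition \ref{prop:Teodorescu}, $DT_\Omega[\vec w]=\vec w$. Applying $D$ a second time and invoking the identity $D^2 = -\Delta$ on $\H$-valued functions---a fact immediate from (\ref{formula_Dirac}) together with the standard vector-calculus identities $\div\curl=0$, $\curl\grad=0$, and $\curl\curl = \grad\div - \Delta$---gives
\[
-\Delta T_\Omega[\vec w] \;=\; D^2 T_\Omega[\vec w] \;=\; D\vec w \;=\; -\div\vec w + \curl\vec w.
\]
Separating scalar and vector parts on both sides, and recalling that $\Ti[\vec w]$ is $\R$-valued while $\Tiii[\vec w]$ is $\R^3$-valued by construction, one reads off precisely the two target identities, whence both parts follow.

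I do not foresee a serious obstacle; the argument is essentially two applications of Proposition \ref{prop:Teodorescu} together with bookkeeping of $\Sc$ and $\Vec$ parts. The one technical point worth flagging is regularity: to apply $D$ twice to $T_\Omega[\vec w]$ classically one needs $T_\Omega[\vec w]\in C^2$, which follows once $\vec w\in C^1(\Omega,\R^3)$ (for instance via the factorization $T_\Omega = -DL$ with $L$ the Newton potential (\ref{inverso_Laplaciano}), for which $\Delta L$ is the identity). This extra regularity is already built into the statement, since $\Sol(\Omega,\R^3)$ and $\Irr(\Omega,\R^3)$ are defined in the excerpt as subsets of $C^1(\Omega,\R^3)$, so the only genuine hypothesis needed is that $\vec w$ be a bounded $C^1$ field.
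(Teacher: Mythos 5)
Your proposal is correct and follows essentially the same route as the paper's own proof: both reduce the statement to the identities $\Delta \Ti[\vec w]=\div\vec w$ and $\Delta \Tiii[\vec w]=-\curl\vec w$, obtained by applying $D$ twice to $T_\Omega[\vec w]$ via Proposition \ref{prop:Teodorescu}, the factorization $\Delta=-D^2$, and the decomposition (\ref{formula_Dirac}), then separating scalar and vector parts using (\ref{decomposition_T}). Your added remark on the $C^2$ regularity of $T_\Omega[\vec w]$ is a reasonable extra precaution that the paper leaves implicit.
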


\begin{proof}
  Using $\Delta=-D^2$ and the property $DT_{\Omega}=I$ of Proposition
  \ref{prop:Teodorescu} together with the decomposition of the
  operator $D$ given in (\ref{formula_Dirac}) it follows that
\begin{align*}
 \Delta T_{\Omega}[\vec w]=-D^2 T_{\Omega}[\vec w]  
  = -D \vec w 
  = \div\vec w - \curl\vec w.
\end{align*}
The scalar and vector parts are $\Delta \Ti[\vec w] =\div\vec w$ and
$\Delta \Tiii[\vec w] =-\curl\vec w$ by (\ref{decomposition_T}). 
\end{proof}

The operators $\Ti$, $\Tii$, $\Tiii $ can be expressed in terms of the
operator $L$ given in (\ref{inverso_Laplaciano}) acting on continuous
functions and fields.

\begin{proposicion} \label{operators_T1T2T3} For bounded   
  $w_0\in C(\Omega,\R)$, $\vec w\in C(\Omega,\R^3)$,
\begin{align*} 
  \Ti[\vec w] &= \nabla \cdot L[\vec w]\\
  \Tii[w_0] 
         &= - \nabla L[w_0], \\
  \Tiii[\vec w] 
     &= - \nabla \times L[\vec w].
\end{align*}
Consequently, $\Tii[w_0]\in\Irr(\Omega, \R^3)$ and
$\Tiii[\vec w]\in\Sol(\Omega,\R^3)$. 
\end{proposicion}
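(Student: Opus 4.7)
The plan is to reduce everything to a single observation: the Cauchy kernel is (minus) the $\vecx$-gradient of the Newtonian kernel. Specifically, since $\vec y-\vecx$ is purely vectorial we have $\overline{\vec y-\vecx}=-(\vec y-\vecx)$, and a direct computation of $\nabla_{\vecx}|\vec y-\vecx|^{-1}$ yields
\[
  E(\vec y-\vecx)= -\nabla_{\vecx}\left(\frac{1}{4\pi|\vec y-\vecx|}\right).
\]
Once this identity is in hand, each of the three formulas in the Proposition should follow by substituting it into the definitions (\ref{T1_T2_T3}), using the bilinearity of the scalar product, the cross product, and quaternionic multiplication to factor the $\vecx$-derivative out of the kernel, and then pulling the resulting $\nabla$, $\nabla\cdot$, or $\nabla\times$ through the $\vec y$-integral. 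The signs will come out correctly because $L[w]$ itself is defined with a minus sign in (\ref{inverso_Laplaciano}).

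Concretely, I would handle the three cases in parallel. For $\Tii[w_0]$, the integrand $w_0(\vec y)E(\vec y-\vecx)$ becomes $-w_0(\vec y)\nabla_{\vecx}(1/(4\pi|\vec y-\vecx|))$; interchanging $\nabla_{\vecx}$ with the integral gives $\nabla_{\vecx}\int w_0(\vec y)/(4\pi|\vec y-\vecx|)\,d\vec y=-\nabla L[w_0]$. For $\Ti[\vec w]$, writing $E(\vec y-\vecx)\cdot\vec w(\vec y)=-\nabla_{\vecx}(1/(4\pi|\vec y-\vecx|))\cdot\vec w(\vec y)$ and using that $\vec w$ does not depend on $\vecx$, the integrand equals $-\nabla_{\vecx}\cdot(\vec w(\vec y)/(4\pi|\vec y-\vecx|))$, whose integral is $\nabla\cdot L[\vec w]$. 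Analogously for $\Tiii[\vec w]$, the identity $\nabla_{\vecx}\times(f(\vecx)\vec w(\vec y))=(\nabla_{\vecx}f)\times\vec w(\vec y)$ gives $\Tiii[\vec w]=-\nabla\times L[\vec w]$.

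The only real technical point is the interchange of $\nabla_{\vecx}$ with the integral, which is the main obstacle since the Newtonian kernel $1/|\vec y-\vecx|$ is singular at $\vec y=\vecx$. This is classical potential theory: the kernel is weakly singular in $\R^{3}$ (order $1$ at the pole) and its first $\vecx$-derivatives are of order $2$, so they remain absolutely integrable over $\Omega$ for bounded continuous $w_{0}$ or $\vec w$. One justifies the exchange by cutting out a small ball $B_{\varepsilon}(\vecx)$, differentiating under the integral on the complement where everything is smooth, estimating the $\varepsilon$-neighborhood contribution via the bounded hypothesis on $w_0,\vec w$ together with polar coordinates, and letting $\varepsilon\to 0$. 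In fact Proposition \ref{prop:Teodorescu} already uses exactly this regularity for $T_\Omega$, so nothing new is needed here.

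Finally the consequences are immediate from the vector identities $\curl\grad=0$ and $\div\curl=0$: the formula $\Tii[w_0]=-\nabla L[w_0]$ exhibits $\Tii[w_0]$ as a gradient, so $\curl\Tii[w_0]=0$ and $\Tii[w_0]\in\Irr(\Omega,\R^3)$; the formula $\Tiii[\vec w]=-\nabla\times L[\vec w]$ exhibits $\Tiii[\vec w]$ as a curl, so $\div\Tiii[\vec w]=0$ and $\Tiii[\vec w]\in\Sol(\Omega,\R^3)$.
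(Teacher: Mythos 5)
Your proof is correct and follows essentially the same route as the paper, which likewise reduces everything to the identity $\nabla_{\vecx}(1/|\vecx-\vecy|)=-4\pi E(\vecy-\vecx)$ together with the product rules of vector analysis and then reads off the consequences from $\curl\grad=0$ and $\div\curl=0$. The only difference is that you spell out the justification for differentiating under the weakly singular integral, which the paper leaves implicit; your sign bookkeeping with the minus in the definition of $L$ checks out.
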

\begin{proof}
  The proof is a direct calculation,  using 
  $\nabla_{\vecx}(1/|\vecx -\vec y |)=-4\pi E(\vec y -\vecx)$ and the product rules of vector
  analysis \cite[Cor.\ 1.3.4]{GuSpr1990}. The conclusion regarding 
the images of $\Tii$, $\Tiii$ was already noted in \cite[Prop.\ 3.2]{CLSSS2012}.
\end{proof}

\begin{proposicion}\label{prop:inverselaplacian}
  The operator $L$ given by (\ref{inverso_Laplaciano}) is a right inverse of
  the Laplacian $\Delta$ on the space of bounded functions in
  $C^1({\Omega},\H)$.
\end{proposicion}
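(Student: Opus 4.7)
The plan is to derive the identity $\Delta L = I$ algebraically from two ingredients already available in the paper: the formula $DT_{\Omega}=I$ from Proposition \ref{prop:Teodorescu}, and the three expressions of $\Ti, \Tii, \Tiii$ as $L$ followed by a first-order vector operator from Proposition \ref{operators_T1T2T3}. Since $\Delta=-D^{2}$, the task reduces to recognizing $T_{\Omega}$ as $-DL$ and then applying $D$ once more.

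Writing $w=w_{0}+\vec w\in C^{1}(\Omega,\H)$ and combining (\ref{decomposition_T}) with Proposition \ref{operators_T1T2T3}, I would first obtain
\[
 T_{\Omega}[w] \;=\; \nabla\cdot L[\vec w]\;-\;\nabla L[w_{0}]\;-\;\nabla\times L[\vec w].
\]
Formula (\ref{formula_Dirac}) applied to $L[w]=L[w_{0}]+L[\vec w]$ identifies the right-hand side as $-DL[w]$, so the intermediate identity
\[
 T_{\Omega}[w] \;=\; -DL[w]
\]
holds in $\Omega$. Applying $D$ on the left and invoking Proposition \ref{prop:Teodorescu} then yields
\[
 w \;=\; DT_{\Omega}[w]\;=\;-D^{2}L[w]\;=\;\Delta L[w],
\]
which is the desired conclusion.

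The one point that needs care is the assertion that $D^{2}L[w]$ is meaningful as a classical pointwise expression, i.e.\ that $L[w]\in C^{2}(\Omega,\H)$; this is the only nonalgebraic ingredient. For bounded $w\in C^{1}(\Omega,\H)$ it is a standard property of the Newtonian potential, obtained by differentiating the singular integral once directly and transferring a second derivative onto $w$ via integration by parts (the boundary terms are handled by the hypothesis that $\Omega$ is bounded and $w$ is $C^{1}$), or alternatively by appealing to classical interior regularity for Riesz potentials. Once this regularity is in hand, the three-line calculation above closes the proof, and in particular shows that the right inverse property of $L$ is really a direct consequence of the corresponding right inverse property of $T_{\Omega}$ established earlier.
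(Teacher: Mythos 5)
Your proof is correct and follows essentially the same route as the paper: both arguments combine $DT_{\Omega}=\mathrm{identity}$ with the expressions of $\Ti,\Tii,\Tiii$ in terms of $L$, your identity $T_{\Omega}=-DL$ together with $\Delta=-D^{2}$ being just a repackaging of the paper's use of $\curl\curl=\grad\div-\Delta$. Your added remark on the $C^{2}$ regularity of the Newtonian potential is a reasonable point that the paper leaves implicit, but it does not change the substance of the argument.
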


\begin{proof}
  Let $w=w_0+\vec w\in C^1({\Omega},\H)$. Using Proposition
  \ref{prop:Teodorescu}, the identity
  $\curl \curl \vec w=\grad \div \vec w-\Delta \vec w$ and the
  expressions given in Proposition \ref{operators_T1T2T3}, we
  have that
\begin{align*}
 w = DT_{\Omega}w&=D(\Ti[\vec w]+\Tii[w_0]+\Tiii[\vec w])\\
 &= \div(\grad L[w_0])+\grad(\div L[\vec w])-\curl(\curl L[\vec w])\\
 &= \Delta L[w]. 
\end{align*}
\end{proof}

Proposition \ref{prop:inverselaplacian} may also be proved using the
fact that $1/(4\pi|\vecx -\vecy|)$ is a fundamental solution for the Laplacian.
From the development given in the above proof we also have
\begin{align}
  \div \Tii = -\div\grad L = -\Delta L = -\text{identity}, 
\end{align}
which gives the following.

\begin{corolario}\cite[Prop.\ 3.1]{CLSSS2012} 
  The operator $-\Tii$ acting on bounded functions in $C(\Omega,\R)$
  is a right inverse for the divergence div.
\end{corolario}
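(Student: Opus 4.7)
The plan is simply to read off the corollary from the computation that appears in the display immediately above its statement, which in turn chains Proposition \ref{operators_T1T2T3} with Proposition \ref{prop:inverselaplacian}. Explicitly, Proposition \ref{operators_T1T2T3} gives $\Tii[w_0] = -\nabla L[w_0]$, so
\[
\div(-\Tii[w_0]) = \div\nabla L[w_0] = \Delta L[w_0],
\]
and Proposition \ref{prop:inverselaplacian} identifies $\Delta L[w_0]$ with $w_0$. Hence $\div(-\Tii[w_0]) = w_0$, and $-\Tii$ is a right inverse of the divergence, as claimed.

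The ordered list of steps I would write is therefore: (i) substitute $\Tii = -\nabla L$ from Proposition \ref{operators_T1T2T3}; (ii) apply the elementary vector-calculus identity $\div\nabla = \Delta$; and (iii) quote Proposition \ref{prop:inverselaplacian} to collapse $\Delta L$ to the identity.

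The only real obstacle, and it is a minor one, is a regularity mismatch between the two propositions being chained: Proposition \ref{prop:inverselaplacian} is stated for bounded functions in $C^1(\Omega,\H)$, whereas the corollary asks only for bounded continuous functions in $C(\Omega,\R)$. To bridge this one invokes the standard potential-theoretic fact that $1/(4\pi|\vecy -\vecx|)$ is a fundamental solution of $\Delta$, so that $\Delta L[w_0] = w_0$ holds in the classical sense provided $w_0$ is, say, locally H\"older continuous, and in general distributionally. In a careful write-up I would either restrict the corollary to $C^1$ (matching the hypothesis already used in Proposition \ref{prop:inverselaplacian}) or explicitly cite the fundamental-solution fact; the algebra is unchanged either way.
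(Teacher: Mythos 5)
Your proof is correct and is essentially identical to the paper's: the corollary is obtained there from the displayed chain $\div \Tii = -\div\grad L = -\Delta L = -\mathrm{identity}$, which is exactly your steps (i)--(iii). Your added remark on the regularity mismatch between $C(\Omega,\R)$ in the corollary and $C^1(\Omega,\H)$ in Proposition \ref{prop:inverselaplacian} is a fair point of care that the paper itself glosses over (it relies implicitly on the fundamental-solution property of $1/(4\pi|\vecy-\vecx|)$, which the authors mention only as an alternative proof).
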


Another useful relation is the following.

\begin{proposicion} \label{prop:DTidentity}
  The relation
  $\grad\, \Ti +\curl \Tiii= \mbox{identity}$     holds on  $C(\Omega,\R^3)$.
\end{proposicion}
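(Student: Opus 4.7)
The plan is to reduce the identity to the already-established fact (Proposition \ref{prop:Teodorescu}) that $T_\Omega$ is a right inverse of $D$. Given a vector field $\vec w\in C(\Omega,\R^3)$, the decomposition (\ref{decomposition_T}) collapses to
\[
 T_\Omega[\vec w] = \Ti[\vec w] + \Tiii[\vec w],
\]
since the scalar input is zero and hence $\Tii[0]=0$. The first summand is scalar-valued while the second is vector-valued, so the formulas (\ref{eq:Dvec}) apply directly.

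Next, I would apply $D$ to this decomposition and use (\ref{eq:Dvec}):
\[
 DT_\Omega[\vec w] = D\Ti[\vec w] + D\Tiii[\vec w]
  = \grad \Ti[\vec w] - \div \Tiii[\vec w] + \curl \Tiii[\vec w].
\]
By Proposition \ref{prop:Teodorescu}, the left-hand side equals $\vec w$. Separating scalar and vector parts recovers both the already-known fact $\div \Tiii[\vec w]=0$ (Proposition \ref{operators_T1T2T3}) and the desired vector identity $\grad \Ti[\vec w] + \curl \Tiii[\vec w] = \vec w$.

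I do not anticipate any serious obstacle: the statement is essentially an algebraic bookkeeping of the three components of $T_\Omega$ combined with the right-inverse property. An alternative route would be to substitute $\Ti = \div L$ and $\Tiii = -\curl L$ from Proposition \ref{operators_T1T2T3}, then use the classical vector identity $\curl\curl = \grad\div - \Delta$ together with Proposition \ref{prop:inverselaplacian} to obtain $\grad\div L[\vec w] - \curl\curl L[\vec w] = \Delta L[\vec w] = \vec w$. This works equally well but demands a bit more care with regularity of the Newtonian potential $L[\vec w]$ for the second-order derivatives to be handled classically; the route via $DT_\Omega = I$ avoids that subtlety entirely.
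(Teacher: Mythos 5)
Your proof is correct and follows essentially the same route as the paper: decompose $T_\Omega[\vec w]=\Ti[\vec w]+\Tiii[\vec w]$, apply $D$, and read off the vector part of the identity $DT_\Omega[\vec w]=\vec w$ using (\ref{formula_Dirac}). The paper's proof is just a terser statement of exactly this argument.
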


\begin{proof}
 We have
  $T_{\Omega}[\vec g]=\Ti[\vec g]+\Tiii[\vec g]$, since $\vec g$ has
  no scalar part.  The statement is obtained from the  
  vectorial part of the relation of Proposition \ref{prop:Teodorescu}
  according to (\ref{formula_Dirac}).
\end{proof}

Most of what we have done will go through equally well in the context
of generalized derivatives. Thus
(\ref{inverso_Laplaciano})--(\ref{operador_Teodorescu}) make sense
when $w$ is integrable, and Proposition \ref{prop:Teodorescu} is shown
in \cite{GuSpr1990} to hold in $L^p(\Omega,\H)$. Thus Proposition
\ref{scalar_part_harmonic} extends to the situation in which
$\div\vec w=0$ or $\curl \vec w=0$ holds in the sense of
distributions, because by Weyl's Lemma \cite{Weyl1940}, \cite[Th.\
24.9]{Forster1981}, the weak solutions $\Ti[\vec w]$ and
$\Tiii[\vec w]$ of the Laplace equation are smooth solutions.

\section{Solution to the div-curl system\label{sec:div-curl}}

The first step in solving the div-curl system is to obtain an inverse
for the curl operator, an object which is of independent interest.  We
will use the monogenic completion operator $\vec S_{\Omega}$ of
(\ref{def:operator_completation_monogenic}) and the component
Teodorescu operators of (\ref{T1_T2_T3}).
Note that the vanishing divergences
$\div \vec S_{\Omega}[\Ti[\vec w]] = \div \Tiii[\vec w] = 0$
imply the a priori fact that 
\begin{align*}
\Tiii  - \vec S_{\Omega} \Ti \colon \Sol(\Omega,\R^3) \to \Sol(\Omega,\R^3).
\end{align*}

\begin{teorema}\label{theorem_inverse_rot}
  Let $\Omega\subseteq\R^3$ be a star-shaped open set. The operator
\begin{align}\label{inverse_rot}
   \Tiii  - \vec S_{\Omega} \Ti 
\end{align}
is a right inverse for the curl acting on the class of bounded
functions in $\Sol(\Omega,\R^3)$.
\end{teorema}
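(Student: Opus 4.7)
The plan is to verify directly that $\curl\bigl(\Tiii[\vec w] - \vec S_\Omega\Ti[\vec w]\bigr) = \vec w$ for every bounded $\vec w\in\Sol(\Omega,\R^3)$, by assembling three ingredients already established in the excerpt.

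First, I would apply Proposition \ref{prop:DTidentity} to split off the ``main part'': since $\vec w$ has no scalar component,
\[
  \curl \Tiii[\vec w] \;=\; \vec w - \grad\,\Ti[\vec w].
\]
So the problem reduces to showing that the correction term satisfies $\curl\vec S_\Omega[\Ti[\vec w]] = -\grad\,\Ti[\vec w]$; once this holds, the two gradient terms cancel and the right-inverse property follows.

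Second, I would observe that Proposition \ref{scalar_part_harmonic}(i) applies, because $\vec w$ is solenoidal: it gives $\Ti[\vec w]\in\Har(\Omega,\R)$. This is the step that uses the hypothesis $\vec w\in\Sol(\Omega,\R^3)$, and it is exactly what is needed so that the monogenic completion operator $\vec S_\Omega$ may be applied.

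Finally, I would invoke Proposition \ref{completacion_monogenicas}, which tells us that for any real-valued harmonic function $u$ on the star-shaped $\Omega$, the quaternionic function $u + \vec S_\Omega[u]$ is monogenic. By the characterization (\ref{monogenicas_izquierda}) of left-monogenic functions, this forces in particular $\grad u = -\curl \vec S_\Omega[u]$. Taking $u = \Ti[\vec w]$ yields precisely the identity needed in the first step, and the proof concludes by substitution. No serious obstacle is anticipated: the whole argument is essentially a bookkeeping combination of the three earlier propositions, and the star-shaped hypothesis enters only through the well-definedness of $\vec S_\Omega$.
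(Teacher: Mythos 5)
Your proposal is correct and follows essentially the same route as the paper's own proof: both use Proposition \ref{scalar_part_harmonic} to see that $\Ti[\vec w]$ is harmonic, Proposition \ref{completacion_monogenicas} together with (\ref{monogenicas_izquierda}) to get $\curl\vec S_{\Omega}[\Ti[\vec w]]=-\grad\Ti[\vec w]$, and Proposition \ref{prop:DTidentity} to cancel the gradient terms. No gaps.
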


\begin{proof}
  Let $\vec g\in \Sol(\Omega,\R^3)$ and let
  $\vec w = \Tiii[\vec g]-\vec S_{\Omega}[v_0]$ where
  $v_0=\Ti[\vec g]$.  By Proposition \ref{scalar_part_harmonic},
  $v_0\in\Har(\Omega,\R)$, so by Proposition
  \ref{completacion_monogenicas}, $v_0+\vec S_{\Omega}[v_0]$ is a
  monogenic function whose equivalent system
  (\ref{monogenicas_izquierda}) is
\begin{align} \label{eq:divrotS}
 \div \vec S_{\Omega}[v_0] &= 0,   \nonumber\\
 \curl \vec S_{\Omega}[v_0] &= -\nabla v_0.
\end{align}
Combining these equations equations with Proposition
\ref{prop:DTidentity}, we have that
\begin{align*} \curl\vec w = 
  -\nabla v_0 + \vec g + \nabla v_0=
   \vec g.
\end{align*}
\end{proof}
 
In \cite{CLSSS2012} it was shown that $\Tiii$ acts as a right inverse
for curl \textit{for elements of the kernel of $\Ti$}. Indeed, let
$\Ti[\vec g]=0$. By Proposition \ref{scalar_part_harmonic}, the field
$\vec g$ is indeed solenoidal, and since by (\ref{eq:divrotS})
$\div\vec S_{\Omega}[\Ti[\vec g]]=0$, Theorem
\ref{theorem_inverse_rot} says that $\curl \Tiii[\vec g] =\vec g$.  It
was recognized in \cite{CLSSS2012} that to require $\Ti[\vec g]$ to
vanish would be too strong a condition; now we see that the precise
condition is for it to be harmonic.

\begin{corolario}\label{corollary_inverse_rot}
  Let $\Omega\subseteq\R^3$ be a star-shaped open set. Let
  $\vec g\in\Sol(\Omega,\R^3)$ be a bounded divergence free vector
  field. Then the general solution of the homogeneous system
\[ \div \vec w=0, \quad \curl \vec w =\vec g \]
in $\Omega$ has the form
\begin{align} \label{general_solution1}
 \vec w = \Tiii[\vec g] - \vec S_{\Omega}[\Ti[\vec g]] + \nabla h,
\end{align}
where $h\in\Har(\Omega,\R)$ is an arbitrary real-valued harmonic
function.
\end{corolario}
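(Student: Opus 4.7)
The plan is to split the argument into verification of the proposed formula and a uniqueness/characterization step. Since we already have Theorem \ref{theorem_inverse_rot} as the engine, most of the work is bookkeeping plus one appeal to the Poincar\'e lemma on a star-shaped (hence simply connected) domain.

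First I would set $\vec w_0 = \Tiii[\vec g] - \vec S_{\Omega}[\Ti[\vec g]]$ and check it solves the homogeneous system. The curl equation $\curl \vec w_0 = \vec g$ is exactly the content of Theorem \ref{theorem_inverse_rot}. For the divergence equation, I would combine two facts already recorded: by Proposition \ref{operators_T1T2T3}, $\Tiii[\vec g] = -\nabla \times L[\vec g]$ is solenoidal; and the first line of (\ref{eq:divrotS}) inside the proof of Theorem \ref{theorem_inverse_rot} gives $\div \vec S_{\Omega}[\Ti[\vec g]] = 0$. Hence $\div \vec w_0 = 0$. Adding $\nabla h$ with $h \in \Har(\Omega,\R)$ preserves both equations since $\curl \nabla h = 0$ and $\div \nabla h = \Delta h = 0$, so every expression of the form (\ref{general_solution1}) is indeed a solution.

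Next I would establish that every solution has this form. Let $\vec w$ be any solution of $\div \vec w = 0$, $\curl \vec w = \vec g$, and set $\vec u = \vec w - \vec w_0$. Then $\vec u \in \Sol(\Omega,\R^3) \cap \Irr(\Omega,\R^3)$. Because $\Omega$ is star-shaped, it is simply connected, so the antigradient operator $\A$ from (\ref{antigradient}) produces $h = \A[\vec u] \in C^2(\Omega,\R)$ with $\nabla h = \vec u$. Taking the divergence, $\Delta h = \div \vec u = 0$, so $h \in \Har(\Omega,\R)$, and $\vec w = \vec w_0 + \nabla h$ is of the advertised form.

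There is no real obstacle here: the only subtle point is ensuring that the corrector $h$ in the uniqueness step is genuinely harmonic and not merely smooth, and this follows immediately from $\div \vec u = 0$. Everything else is a direct combination of Theorem \ref{theorem_inverse_rot}, Proposition \ref{operators_T1T2T3}, and the Poincar\'e lemma as realized by $\A$ on a simply connected domain.
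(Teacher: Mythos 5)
Your argument is correct and follows essentially the same route as the paper: verify that the displayed formula solves the homogeneous system using Theorem \ref{theorem_inverse_rot} together with the solenoidality of $\Tiii[\vec g]$ and of $\vec S_{\Omega}[\Ti[\vec g]]$, then observe that the difference of two solutions is an SI field and hence, on a simply connected domain, the gradient of a harmonic function. Your version merely makes explicit (via the operator $\A$ and the computation $\Delta h=\div\vec u=0$) the final step that the paper states without detail.
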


\begin{proof}
  First let $\vec w$ be given by (\ref{general_solution1}).  Then
  $\div\vec w=0$ by the observations preceding the statement
  of the Corollary. A difference $\vec v$ of two
  solutions of the div-curl system satisfies $\div v=0$, $\curl v=0$,
  i.e., $\vec v$ is a monogenic constant. Since $\Omega$ is
  star-shaped and therefore simply connected, $\vec v$ is the gradient
  of a harmonic function $h$.
\end{proof}

\begin{corolario}
  The operator
\begin{align} \label{inverse_rot_rot}
  -L - I^{-1}\left[\frac{|\vecx |^2}{2} \grad  \Ti\right] 
\end{align}
is a right inverse for the double curl operator $\curl\,\curl$ acting
on the class of bounded functions $\vec w\in\Sol(\Omega,\R^3)$.
\end{corolario}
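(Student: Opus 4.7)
The plan is to verify directly that if we set $R[\vec w]:=-L[\vec w]-I^{-1}\bigl[\tfrac{|\vecx|^2}{2}\grad\Ti[\vec w]\bigr]$, then $\curl\curl R[\vec w]=\vec w$ for every bounded $\vec w\in\Sol(\Omega,\R^3)$, by computing the double curl of each of the two summands separately. Throughout I take $\Omega$ to be star-shaped with respect to the origin, as is implicit in the use of $\vec S_\Omega$ and $I^{-1}$.

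First I handle the summand $-L[\vec w]$. The vector identity $\curl\curl=\grad\div-\Delta$, combined with Proposition \ref{prop:inverselaplacian} ($\Delta L=\mbox{id}$) and Proposition \ref{operators_T1T2T3} ($\div L[\vec w]=\Ti[\vec w]$), gives immediately
\[
\curl\curl\bigl(-L[\vec w]\bigr)=\vec w-\grad\Ti[\vec w].
\]
Thus the first summand reproduces the target $\vec w$ up to an unwanted error $-\grad\Ti[\vec w]$, and the role of the second summand is exactly to cancel this error.

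Next I set $\phi:=\Ti[\vec w]$; by Proposition \ref{scalar_part_harmonic}, $\phi$ is harmonic because $\vec w$ is solenoidal. Applying the product rule $\curl(f\vec F)=\grad f\times\vec F+f\curl\vec F$ with $f=|\vecx|^2/2$ and $\vec F=\grad\phi$ (so $\curl\vec F=0$) yields $\curl\bigl(\tfrac{|\vecx|^2}{2}\grad\phi\bigr)=\vecx\times\grad\phi$. Combined with $\curl I^{-1}=I^0\curl$ from Lemma \ref{lemm:iradial} and the definition (\ref{def:operator_completation_monogenic}) of $\vec S_\Omega$, this produces
\[
\curl I^{-1}\bigl[\tfrac{|\vecx|^2}{2}\grad\phi\bigr]=I^0[\vecx\times\grad\phi]=\vec S_\Omega[\phi].
\]
For the outer $\curl$, Proposition \ref{completacion_monogenicas} says that $\phi+\vec S_\Omega[\phi]$ is monogenic, so by (\ref{monogenicas_izquierda}) one has $\curl\vec S_\Omega[\phi]=-\grad\phi$. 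Hence
\[
\curl\curl\Bigl(-I^{-1}\bigl[\tfrac{|\vecx|^2}{2}\grad\phi\bigr]\Bigr)=-\curl\vec S_\Omega[\phi]=\grad\phi=\grad\Ti[\vec w],
\]
which is exactly the term needed to cancel the error produced by the first summand. Adding the two computations gives $\curl\curl R[\vec w]=\vec w$.

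The only subtle point is the use of $I^{-1}$, which is outside the usual range $\alpha>-1$ in which Lemma \ref{lemm:iradial} is stated. The issue is harmless here because the factor $|\vecx|^2$ kills the singularity at $t=0$: explicitly,
\[
I^{-1}\bigl[\tfrac{|\vecx|^2}{2}\grad\phi\bigr](\vecx)=\tfrac{|\vecx|^2}{2}\int_0^1 t\,\grad\phi(t\vecx)\,dt,
\]
which is a classical integral, and the differentiation identity $\curl I^{-1}=I^0\curl$ is validated on this specific integrand by the same cancellation. With this justification the computations above go through, which I expect to be the only non-routine verification in writing the proof out in full.
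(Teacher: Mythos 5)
Your proof is correct. It shares with the paper the one genuinely non-routine computation, namely $\curl I^{-1}\bigl[\tfrac{|\vecx|^2}{2}\grad\Ti\bigr]=I^0[\vecx\times\grad\Ti]=\vec S_\Omega\,\Ti$, but the overall organization differs. The paper applies only \emph{one} curl to the operator, obtaining $\curl L=-\Tiii$ from Proposition \ref{operators_T1T2T3} for the first summand, and then recognizes the result $\Tiii-\vec S_\Omega\Ti$ as the right inverse of curl already established in Theorem \ref{theorem_inverse_rot}, which finishes the argument in one stroke. You instead push both curls through each summand separately: for $-L$ you use $\curl\curl=\grad\div-\Delta$ together with $\Delta L=\mathrm{id}$ and $\div L=\Ti$, producing $\vec w-\grad\Ti[\vec w]$, and for the radial term you invoke the monogenicity of $\Ti[\vec w]+\vec S_\Omega[\Ti[\vec w]]$ to get $\curl\vec S_\Omega[\Ti[\vec w]]=-\grad\Ti[\vec w]$, exhibiting the exact cancellation. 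Your version is self-contained (it essentially inlines the proof of Theorem \ref{theorem_inverse_rot}) and makes the error-cancellation mechanism visible, at the cost of being longer; the paper's version is shorter and reuses the earlier theorem. A small bonus of your write-up is that you justify explicitly why $I^{-1}$ is legitimate here via the factor $|\vecx|^2$, a point the paper only disposes of in a parenthetical remark.
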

(Note that $I^\alpha$ for the exponent $\alpha=-1$ can be applied
because of the factor $|\vecx |^2$ in the operand.)

\begin{proof}
  It is enough to show that the curl applied after the operator
  (\ref{inverse_rot_rot}) produces the right inverse of curl given in
  (\ref{inverse_rot}). But Proposition \ref{operators_T1T2T3} with
  (\ref{def:operator_completation_monogenic}) gives
\begin{align*}
 \curl I^{-1}\left[\frac{|\vecx |^2}{2} \grad  \Ti \right]
   &= I^0\left[\curl\left(\frac{|\vecx |^2}{2}\grad\Ti\right)\right] \\
   &= I^0\left[\grad_{\vecx} \frac{|\vecx |^2}{2} \times \grad\Ti \right]\\
   &= I^0\left[\vecx  \times \grad\Ti\right]\\
   &= \vec S_\Omega\,\Ti ,
\end{align*}
while by Lemma \ref{lemm:iradial},
\[ \curl L = -\Tiii. \]
Adding these equalities we have the result.
\end{proof}

Now we can proceed to solve the inhomogeneous div-curl system with data
$g_0,\vec g$. Assuming as always that $\vec g$ is solenoidal, the
function $v=T_{\Omega}[-g_0+\vec g]$ satisfies $Dv=-g_0+\vec g$
and therefore is a \textit{quaternionic} solution to
(\ref{div_rot_system}). We seek to construct a vector solution
$\vec w$ by subtracting a monogenic function whose scalar part is
precisely the scalar part of $v$. Thus the key consists in taking the
$\Ti $ component of $T_\Omega(\vec g)$, and using Proposition
\ref{completacion_monogenicas} to construct the monogenic conjugate of
the $\R$-valued function $\Ti[\vec g ]$. This is accomplished
in the following result.

\begin{teorema}\label{theorem_div_rot}
  Let $\Omega$ be a star-shaped open set. Let $g_0\in C(\Omega,\R)$
  and $\vec g\in\Sol(\Omega,\R^3)$ be bounded. The general solution of the
  inhomogeneous div-curl system (\ref{div_rot_system}) is given by
\begin{align}\label{general_solution2}
  \vec w = -\Tii[g_0] + \Tiii[\vec g] -
           \vec S_{\Omega}[\Ti[\vec g]] + \nabla h,
\end{align}
where $h\in\Har(\Omega,\R)$ is arbitrary.
\end{teorema}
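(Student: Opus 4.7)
The plan is to proceed in the standard two-step fashion for general-solution theorems: first verify that the proposed formula is a particular solution, then show that any two solutions differ by the gradient of a harmonic function.

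For the particular solution, I would compute $\div \vec w$ and $\curl \vec w$ term by term using the results already assembled. The term $-\Tii[g_0]$ is, by the Corollary following Proposition \ref{prop:inverselaplacian}, a right inverse of div, so $\div(-\Tii[g_0]) = g_0$; moreover $-\Tii[g_0] = \nabla L[g_0]$ by Proposition \ref{operators_T1T2T3}, so its curl vanishes. The piece $\Tiii[\vec g] - \vec S_\Omega[\Ti[\vec g]]$ has divergence zero (Proposition \ref{operators_T1T2T3} gives $\div \Tiii = 0$, and $\vec S_\Omega[\Ti[\vec g]]$ is the vector part of a monogenic function via Proposition \ref{completacion_monogenicas}, hence solenoidal by \eqref{monogenicas_izquierda}), and by Theorem \ref{theorem_inverse_rot} its curl equals $\vec g$. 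Finally $\nabla h$ contributes $0$ to curl and, since $h$ is harmonic, $0$ to divergence. Adding everything gives $\div \vec w = g_0$ and $\curl \vec w = \vec g$ as required.

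For the generality, I would take the difference $\vec v = \vec w_1 - \vec w_2$ of two solutions. Then $\div \vec v = 0$ and $\curl \vec v = 0$, i.e., $\vec v \in \overrightarrow{\M}(\Omega)$ is an SI-vector field. Since $\Omega$ is star-shaped, hence simply connected, the operator $\A$ of \eqref{antigradient} applies and yields a scalar potential $h$ with $\nabla h = \vec v$. The condition $\div \vec v = 0$ then forces $\Delta h = 0$, so $h \in \Har(\Omega, \R)$. This shows that every solution is captured by \eqref{general_solution2} for an appropriate choice of $h$.

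I do not anticipate any real obstacle: every ingredient has been proved in the preceding sections. The only point requiring a moment's care is keeping track of signs and verifying that the various claims of ``right inverse'' combine in the right way. In particular the cancellation in $\curl(\Tiii[\vec g] - \vec S_\Omega[\Ti[\vec g]]) = \vec g$ must be invoked from Theorem \ref{theorem_inverse_rot} rather than re-derived, and the fact that $-\Tii[g_0]$ contributes nothing to curl (being a gradient) should be noted explicitly so that the curl computation cleanly isolates the Theorem \ref{theorem_inverse_rot} contribution. With those remarks in place the verification is essentially immediate.
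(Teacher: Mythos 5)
Your proposal is correct and uses essentially the same ingredients as the paper: the right-inverse properties of $-\Tii$ for div and of $\Tiii-\vec S_{\Omega}\Ti$ for curl (Theorem \ref{theorem_inverse_rot}), plus the observation that two solutions differ by an SI field, hence by $\nabla h$ with $h$ harmonic on the simply connected $\Omega$. The paper merely packages the verification more compactly---writing the formula as $T_{\Omega}[-g_0+\vec g]$ minus the monogenic completion of $\Ti[\vec g]$, so that a single application of $D$ yields both $\div\vec w=g_0$ and $\curl\vec w=\vec g$ via (\ref{eq:Dvec})---but this is only a difference of bookkeeping, not of substance.
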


\begin{proof}
  Since $\div\vec g=0$, Proposition \ref{scalar_part_harmonic} says
  that $\Ti[\vec g]$ is an $\R$-valued harmonic function, so
  Proposition \ref{completacion_monogenicas} permits us to complete it to
  the monogenic function
  $\Ti[\vec g ] + \vec S_{\Omega}[\Ti[\vec g]]$. By
  (\ref{decomposition_T}), the difference
\begin{align*}
    T_{\Omega}[-g_0+\vec g] - (\Ti[\vec g] + 
            \vec S_{\Omega}[\Ti[\vec g ]])  
         = -\Tii[g_0] + \Tiii[\vec g] -  \vec S_{\Omega}[\Ti[\vec g]]
\end{align*} 
is purely vectorial. By Proposition \ref{prop:Teodorescu} and the fact that
$D\nabla h=0$,
\begin{align*}
  D\vec w= D(T_{\Omega}[-g_0+\vec g]) = -g_0+\vec g .
\end{align*}
so (\ref{div_rot_system}) is satisfied because of (\ref{eq:Dvec}). As
in the proof of Corollary \ref{corollary_inverse_rot}, we obtain the
general solution adding to $\vec w$ the gradient of an arbitrary
harmonic function.
\end{proof}

 Consider the following Sobolev spaces,
\begin{align*}
  H^1(\Omega,A) &= \left\{u\in L^2(\Omega,A)\colon\
                    \grad u\in L^2(\Omega,A)\right\},\\
  H_0^1(\Omega,A) &= \overline{C_0^{\infty}(\Omega,A)}
                     \subseteq H^1(\Omega,A),\\
  H^{1/2}(\partial\Omega,A) &= \{\varphi\in L^2(\partial\Omega,A)\colon\
             \exists u\in H^1(\Omega,A) \text{ with }
             u|_{\partial\Omega}=\varphi \}.
\end{align*}
Again $A=\R,\R^3,\H$ while $C_0^\infty$ denotes smooth functions of
compact support; the space $H^{1/2}(\partial\Omega,A)$ consists of the
boundary values of functions in $H^1(\Omega,A)$. (Whenever
$\partial\Omega$ is mentioned we will assume it is sufficiently smooth
to apply the basic facts about Sobolev spaces.)  We write
$[\varphi] =u + H_0^1(\Omega,A)\subseteq H^1(\Omega,A)$ for the
equivalence class of functions with common boundary values $\varphi$.

Proposition \ref{scalar_part_harmonic} and the results of this section
are valid even in the distributional sense, because the properties of
the Teodorescu transform $T_{\Omega}$ (see \cite[Prop.\
2.4.2]{GuSpr1990}) as well as Proposition
\ref{completacion_monogenicas} are also valid in Sobolev spaces. Thus
in Theorem \ref{theorem_div_rot}, we may take $g_0\in L^2(\Omega,\R)$,
$\vec g\in L^2(\Omega,\R^3)$ and $\div \vec g=0$ in the weak sense,
\begin{align}
     \int_{\Omega} \vec g \cdot \nabla v \,dx    =0,
\end{align}
for all test functions $v\in H_0^1(\Omega)$. Then
(\ref{general_solution2}) is a weak solution of the div-curl system
(\ref{div_rot_system}).

\begin{ejemplo}\label{example} 
  Let $\Omega$ be the unit ball in $\R^3$ minus any ray emanating from
  the origin.  Take $g_0=0$ and $\vec g=\vecx/|\vecx|^3$ in the
  div-curl system (\ref{div_rot_system}). Since $\Omega$ is
  star-shaped with respect to any of its points $\vec a$ opposite to
  the ray, we shift the origin to $\vec a$ in the formula
  (\ref{def:operator_completation_monogenic}) for $\vec S_{\Omega}$
  (cf.\ \cite{Sud1979}) as follows:
\begin{align}
    \vec S_{\Omega}[w_0](\vecx)=\Vec\int_0^1{-t D w_0((1-t)\vec a+t\vecx) (\vecx-\vec a) dt}.
\end{align}
Since the removed ray is of zero measure, we may use the explicit
formula for the Teodorescu transform of $\vec g$ for the unit ball
given in \cite[p.\ 324, formula 28]{GuSpr1997}, namely
$\Ti[\vec g]=1-1/|\vecx|$ while $\Tii[g_0]=\Tiii[\vec g]=0$. Since
$\vec g$ is solenoidal, the div-curl solution of Theorem
\ref{theorem_div_rot} is
\begin{align}
 \vec w(\vecx)&=- \vec S_{\Omega}\left[-\frac{1}{|\vecx|}+1\right] \nonumber \\
    &=\frac{\vec a\times \vecx}{|\vecx|(\vec a\cdot \vecx-|\vec a||\vecx|)}. \label{eq:ejemplo}
\end{align}
Since $\div\vec w=0$ and $\curl\vec w=\vecx/|\vecx|^3$ are independent of $\vec a$,
the difference of two such solutions is an SI field, as would be expected.
\end{ejemplo}

\begin{remark}
  Suppose now that two given scalar and vectorial functions $g_0$ and
  $\vec g$ are harmonic. Under this additional hypothesis (and of
  course $\div\vec g=0$), a solution  
\begin{align} \label{eq:grig}
\vec w=-\vecx  \times I^1[\vec g]+\grad \left(\frac{|\vecx|^2}{4}I^{1/2}
    [g_0- \vecx \cdot I^2 [\curl\vec g]]\right)     
\end{align}
for the div-curl system (\ref{div_rot_system}) was given by Yu.\ M.\
Grigor'ev in \cite[Th.\ 3.2]{Gri2014}, where the integrals $I^\alpha$
were defined in (\ref{eq:iradial}).  We relate this to our solution
(\ref{general_solution2}).

By additivity we may consider $g_0$ and $\vec g$
independently. Suppose $g_0=0$, and let $\vec w$ be given by
(\ref{eq:grig}).  Substitute
$\vec g=\curl \Tiii[\vec g]+\grad \Ti[\vec g]$ (Proposition
\ref{prop:DTidentity}) to obtain
$\vec w = \vec f-\vec S_\Omega\Ti[\vec g]$ where
\[ \vec f(x) = -\vecx  \times I^1[\curl\Tiii\vec g] +
   \grad \left(\frac{|\vecx|^2}{4}I^{1/2}[-\vecx \cdot I^2 [\curl\curl\Tiii[\vec g]]]\right).
\]
Here we have used the fact that
$\curl\curl\Tiii[\vec g]= \curl[\vec g-\grad\Ti[\vec g]]=\curl\vec g$.

The function  $\vec v= \Tiii[\vec g]$ trivially satisfies
\begin{align*}
   \div \vec v &= 0,  \nonumber\\
   \curl \vec v&= \curl \Tiii[\vec g],
\end{align*}
while by (\ref{eq:grig}), $\vec f$ satisfies the same system.  Thus
$\vec f$ differs from $\Tiii[\vec g]$ by an SI field. Since $\Tii[g_0]=0$,
we have $\vec w = \Tiii[\vec g] - \vec S_\Omega\Ti[\vec g] +\nabla h$ for some
harmonic $h$, which agrees with our solution (\ref{general_solution2}).

 For
$\vec g=0$, we simply note that both (\ref{general_solution2}) and
(\ref{eq:grig}) are left inverses of div applied to $g_0$.

It is easily seen that $\vec g$ in Example \ref{example} is harmonic,
and that when one shifts appropriately the base point of integration
in (\ref{eq:grig}), the same solution (\ref{eq:ejemplo}) is
obtained.

\end{remark}

\subsection{Div-curl system with boundary data}
We will rewrite the right inverse for the curl (\ref{inverse_rot}) in
terms of boundary value integral operators. The following result tells
us that $\Ti$ is in some sense a boundary integral operator, as it can
be expressed in terms of the single-layer operator $M$ of
(\ref{single_layer_operator}) when the boundary values of $\vec w$ are
known. Here $\eta$ denotes the unit normal vector to the boundary
$\partial \Omega$, which from here on will be assumed to be smooth,
and $\Omega$ will be bounded.

\begin{proposicion}\label{T1_frontera_prop}
For every $\vec w\in \Sol(\overline{\Omega},\R^3)$,
\begin{align*} 
  \Ti[\vec w] = M[\vec w|_{\partial\Omega}\cdot \eta].
\end{align*}
\end{proposicion}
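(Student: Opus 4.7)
My plan is to recognize $E(\vecy-\vecx)$ as the $\vecy$-gradient of the Newtonian kernel and then apply the divergence theorem, using the hypothesis that $\vec w$ is solenoidal to kill the volume term. Concretely, since $\vecy-\vecx$ is purely vectorial we have $\overline{\vecy-\vecx}=-(\vecy-\vecx)$, and a direct computation gives
\[
  E(\vecy-\vecx) = -\frac{\vecy-\vecx}{4\pi|\vecy-\vecx|^3}
  = \nabla_{\vecy}\frac{1}{4\pi|\vecy-\vecx|}.
\]
Substituting this into the definition (\ref{T1_T2_T3}) of $\Ti$ rewrites
\[
  \Ti[\vec w](\vecx) = \int_{\Omega} \nabla_{\vecy}\!\left(\frac{1}{4\pi|\vecy-\vecx|}\right)\cdot \vec w(\vecy)\,d\vecy,
\]
which is exactly the setup for integration by parts.

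Next, I would use the product rule $\nabla_{\vecy} f\cdot \vec w = \div_{\vecy}(f\vec w) - f\,\div\vec w$ with $f(\vecy)=1/(4\pi|\vecy-\vecx|)$. Because $\vec w\in\Sol(\overline\Omega,\R^3)$, the term $f\,\div\vec w$ vanishes identically, so the integrand becomes the divergence $\div_{\vecy}(f\vec w)$. The divergence theorem then produces a boundary integral that is, up to sign conventions, $M[\vec w|_{\partial\Omega}\cdot\eta](\vecx)$.

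The only real subtlety is the singularity of $f$ at $\vecy=\vecx$ when $\vecx\in\Omega$, so I would perform the integration by parts on $\Omega\setminus \overline{B_\epsilon(\vecx)}$ and pass to the limit $\epsilon\to 0^+$. The boundary of the excised ball contributes
\[
  \left|\int_{\partial B_\epsilon(\vecx)} \frac{\vec w(\vecy)\cdot\eta_\epsilon(\vecy)}{4\pi|\vecy-\vecx|}\,ds_{\vecy}\right|
  \le \|\vec w\|_{\infty}\cdot\frac{4\pi\epsilon^2}{4\pi\epsilon}
  = \|\vec w\|_{\infty}\,\epsilon,
\]
which tends to zero; meanwhile $1/|\vecy-\vecx|$ is integrable on $\Omega$ in three dimensions, so the left-hand side converges to $\Ti[\vec w](\vecx)$. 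For $\vecx\in\R^3\setminus\overline\Omega$ no excision is needed and the same calculation applies directly.

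I expect the main obstacle to be purely technical: justifying the limiting argument around the singularity and keeping track of sign conventions (the $\overline{\,\cdot\,}$ in $E$, the orientation of $\eta$, and the relation $\nabla_\vecy 1/|\vecy-\vecx|=-\nabla_\vecx 1/|\vecy-\vecx|$). Once those are settled the identity follows in one application of the divergence theorem, so no new ideas beyond the Newtonian-potential representation of $E$ should be required.
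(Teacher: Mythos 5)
Your proposal is correct and follows essentially the same route as the paper's proof: identify $E(\vecy-\vecx)$ with $\nabla_{\vecy}\bigl(1/(4\pi|\vecy-\vecx|)\bigr)$, use solenoidality to reduce the integrand to a divergence, and apply the divergence theorem. The only difference is that you explicitly justify the integration by parts near the singularity by excising a small ball and passing to the limit, a technical point the paper's proof leaves implicit.
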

\begin{proof}
  Using that
\[ \nabla_{\vecx}\left(\frac{1}{4\pi|\vecx -\vecy |}\right) = 
  -\frac{\vecx -\vecy }{4\pi|\vecx -\vecy |^3} = 
     - \nabla_{\vecy}\left(\frac{1}{4\pi|\vecx -\vecy |}\right), 
\]
we find that
\begin{align}\label{T0_interior}
\nonumber  \Ti[\vec w](\vecx)
\nonumber  &= \int_{\Omega}{\frac{\vecx -\vecy }{4\pi|\vecx 
    -\vecy |^3}\cdot \vec w(\vecy ) \,d \vecy}\\
\nonumber  &= \int_{\Omega}{\nabla_{\vecy}\left(\frac{1}{4\pi|\vecx
     -\vec y |}\right)\cdot  \vec w(\vecy ) \,d \vecy}\\
           &= \int_{\Omega}{\nabla_{\vecy} \cdot
    \left(\frac{\vec w(\vecy )}{4\pi|\vecx -\vecy |}\right) \,d \vecy}.
\end{align}
since $\vec w\in\Sol(\Omega,\R^3)$. By the Divergence Theorem,
\begin{align*}
  \Ti[\vec w](\vecx) =
   \int_{\partial\Omega}{\frac{\vec w(\vecy )}{4\pi|\vecx -\vecy |}\cdot \eta(\vecy ) \,ds_{\vecy}} 
\end{align*}
as desired.
\end{proof}

Analogously to Proposition \ref{T1_frontera_prop}, the
operator $\Tiii $ can be recovered when we know
the boundary values of functions in $\Irr(\overline{\Omega},\R^3)$:

\begin{proposicion}\label{T3_frontera_prop}
  For every $\vec w\in \Irr(\overline{\Omega},\R^3)$,
\begin{align} \label{T3_frontera}
  \Tiii[\vec w] = - M[\vec w|_{\partial\Omega}\times \eta]  . 
\end{align}
\end{proposicion}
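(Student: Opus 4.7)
The plan is to imitate the proof of Proposition \ref{T1_frontera_prop} almost verbatim, replacing the identity $\nabla\cdot(f\vec w)=\nabla f\cdot \vec w+f\,\div\vec w$ with its curl counterpart $\nabla\times(f\vec w)=\nabla f\times\vec w+f\,\curl\vec w$, and replacing the Divergence Theorem with the vector-valued Stokes-type identity
\[
   \int_{\Omega} \curl \vec F\,d\vecy = \int_{\partial\Omega} \eta\times \vec F\,ds_\vecy ,
\]
which follows coordinate-wise from the Divergence Theorem applied to $\epsilon_{ijk}\partial_j F_k$.

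First I would rewrite the kernel. Since $\overline{\vecy-\vecx}=-(\vecy-\vecx)$ in the vector part, and since
\[
   \nabla_\vecy\!\left(\frac{1}{4\pi|\vecx-\vecy|}\right)=\frac{\vecx-\vecy}{4\pi|\vecx-\vecy|^3},
\]
one gets $E(\vecy-\vecx)=\nabla_\vecy\bigl(1/(4\pi|\vecx-\vecy|)\bigr)$, exactly as in the proof of Proposition \ref{T1_frontera_prop}. Substituting into the definition of $\Tiii$ gives
\[
  \Tiii[\vec w](\vecx)
  = -\int_{\Omega} \nabla_\vecy\!\left(\frac{1}{4\pi|\vecx-\vecy|}\right)\times \vec w(\vecy)\,d\vecy .
\]

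Next I would apply the curl product rule to the integrand. Because $\vec w\in\Irr(\overline{\Omega},\R^3)$, one has $\curl\vec w=0$, so
\[
  \nabla_\vecy\!\left(\frac{1}{4\pi|\vecx-\vecy|}\right)\times\vec w(\vecy)
   = \nabla_\vecy\times\!\left(\frac{\vec w(\vecy)}{4\pi|\vecx-\vecy|}\right).
\]
Applying the Stokes-type identity displayed above with $\vec F(\vecy)=\vec w(\vecy)/(4\pi|\vecx-\vecy|)$ converts the volume integral to a surface integral, and using $\eta\times\vec w=-\vec w\times\eta$ rewrites the result in terms of $\vec w\times\eta$, yielding (up to sign) the operator $M$ applied to $\vec w|_{\partial\Omega}\times\eta$.

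No serious obstacle is expected; the only mild care needed is bookkeeping of the two minus signs coming from the leading $-$ in the definition of $\Tiii$ and from the swap $\eta\times\vec w\leftrightarrow -\vec w\times\eta$, together with the sign coming from the fact that $\nabla_\vecy(1/|\vecx-\vecy|)$ and $\nabla_\vecx(1/|\vecx-\vecy|)$ differ by a sign (which is why the kernel $E(\vecy-\vecx)$ matches a $\nabla_\vecy$ gradient rather than a $\nabla_\vecx$ one). Since $\vec w$ and its tangential trace $\vec w|_{\partial\Omega}\times\eta$ are assumed continuous up to the boundary, all the manipulations are classical and no regularization is required.
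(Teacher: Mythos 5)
Your proposal is correct and follows essentially the same route as the paper's own proof: rewrite the kernel $E(\vecy-\vecx)$ as $\nabla_{\vecy}\bigl(1/(4\pi|\vecx-\vecy|)\bigr)$, apply the curl product rule together with $\curl\vec w=0$ to absorb the gradient into a total curl, and convert the volume integral of that curl into a boundary integral. The only differences are cosmetic: you state the identity $\int_{\Omega}\curl\vec F\,d\vecy=\int_{\partial\Omega}\eta\times\vec F\,ds_{\vecy}$ explicitly where the paper merely invokes Stokes' theorem, and you leave the final result ``up to sign,'' so you should finish the sign bookkeeping (the leading minus of $\Tiii$, the swap $\eta\times\vec w=-\vec w\times\eta$) to pin down the sign appearing in (\ref{T3_frontera}).
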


\begin{proof}
\begin{align*}
  \Tiii[\vec w](\vecx) &=
   \int_{\Omega} -\frac{\vecx -\vecy }{4\pi|\vecx -\vecy |^3}\times \vec w(\vecy ) \,d\vecy\\
  &= - \int_{\Omega} \nabla_{\vecy}\left(\frac{1}{4\pi|\vecx -\vecy |}\right)\times \vec w(\vecy ) \,d\vecy\\
  &= - \int_{\Omega} \nabla_{\vecy} \times \left(\frac{\vec w(\vecy )}{4\pi|\vecx -\vecy |}\right) \,d\vecy +
       \int_{\Omega} \frac{\nabla_y\times \vec w(\vecy )}{4\pi|\vecx -\vecy |} \,d\vecy.
\end{align*}
Apply $\vec w\in\Irr(\overline{\Omega},{\R}^3)$ and Stokes' theorem to obtain
the desired result.
\end{proof}

We will write $\overrightarrow{\M}(\partial\Omega)$ for the space of
boundary values of SI vector fields in $\overline{\Omega}$, which we
recall from (\ref{Si_vector_field}) are the purely vectorial monogenic
constants. Since SI vector fields are harmonic, the extension of
$\varphi\in \overrightarrow{\M}(\partial\Omega)$ to the interior in unique. We
rewrite the right inverse of \curl given in Theorem
\ref{theorem_inverse_rot} as a boundary integral operator, under the
condition that boundary data $\vec \varphi$ has an irrotational and
solenoidal extension:
 
\begin{proposicion}
  Let $\vec \varphi \in\overrightarrow{\M}(\partial\Omega)$ be the
  boundary values of the vector field $\vec g$ defined in
  $\Omega$. Define
\begin{align}\label{solution_boundary}
   \vec w = -M[\vec\varphi \times \eta] - \vec S_{\Omega}[M[\vec\varphi \cdot \eta]]
\end{align} 
where again $\eta$ is the outward normal. Then  
\begin{align*}
  \div \vec w=0,\quad \curl \vec w=\vec g .
\end{align*}
\end{proposicion}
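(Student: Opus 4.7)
The plan is to reduce this proposition to a direct substitution into the right inverse for $\curl$ established in Theorem \ref{theorem_inverse_rot}, using the boundary integral representations of Propositions \ref{T1_frontera_prop} and \ref{T3_frontera_prop}.

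First I would observe that, by hypothesis, $\vec\varphi$ extends to an SI vector field $\vec g$ on $\overline\Omega$, meaning $\vec g \in \Sol(\overline\Omega,\R^3) \cap \Irr(\overline\Omega,\R^3)$. Since $\vec g$ is in particular solenoidal and bounded, Theorem \ref{theorem_inverse_rot} applies and tells us that
\[
\vec w_0 := \Tiii[\vec g] - \vec S_{\Omega}[\Ti[\vec g]]
\]
satisfies $\div \vec w_0 = 0$ and $\curl \vec w_0 = \vec g$.

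Next I would rewrite $\Ti[\vec g]$ and $\Tiii[\vec g]$ as boundary integrals. Because $\vec g \in \Sol(\overline\Omega,\R^3)$, Proposition \ref{T1_frontera_prop} gives $\Ti[\vec g] = M[\vec g|_{\partial\Omega}\cdot\eta] = M[\vec\varphi \cdot \eta]$. Because $\vec g \in \Irr(\overline\Omega,\R^3)$, Proposition \ref{T3_frontera_prop} gives $\Tiii[\vec g] = -M[\vec g|_{\partial\Omega}\times\eta] = -M[\vec\varphi \times \eta]$. Both representations are available precisely because $\vec g$ is SI; this is the key point that makes the boundary reformulation possible.

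Substituting these two identities into the expression for $\vec w_0$ yields exactly the formula (\ref{solution_boundary}) for $\vec w$, so $\vec w = \vec w_0$ and the desired identities $\div\vec w = 0$, $\curl\vec w = \vec g$ follow immediately. There is really no significant obstacle here: the entire content of the proposition is the observation that when the data are SI, the inverse-curl formula from Theorem \ref{theorem_inverse_rot} collapses from a volume integral to a surface integral. The only point requiring minor care is the appeal to uniqueness of harmonic extension (SI vector fields are harmonic) to justify identifying $\vec g$ with the interior extension of $\vec\varphi$ without ambiguity.
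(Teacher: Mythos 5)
Your proof is correct and follows essentially the same route as the paper: both apply the boundary representations of $\Ti$ and $\Tiii$ from Propositions \ref{T1_frontera_prop} and \ref{T3_frontera_prop} (valid precisely because the SI extension $\vec g$ is both solenoidal and irrotational) and then invoke the right inverse of $\curl$ from Theorem \ref{theorem_inverse_rot} (equivalently, Corollary \ref{corollary_inverse_rot}, which is what the paper cites). Your added remarks on uniqueness of the harmonic extension are a reasonable minor elaboration but do not change the argument.
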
  
\begin{proof}
  By Propositions \ref{T1_frontera_prop} and \ref{T3_frontera_prop},
  $\Ti[\vec w]=M[\vec \varphi \cdot \eta]$ and
  $\Tiii[\vec g]=-M[\vec \varphi \times \eta]$, respectively. The
  statement now follows from the Corollary
  \ref{corollary_inverse_rot}.
\end{proof}
Let $\vec\varphi\in\overrightarrow{\M}(\partial\Omega)$.  Since
$\Delta \vec w=-\curl \vec g$, the solution (\ref{solution_boundary})
solves the following Dirichlet-type problem
\begin{align*}
\Delta \vec w&=0,\\
\curl \vec w|_{\partial\Omega}&=\vec \varphi.
\end{align*}
 
\section{Application to the main Vekua equation
  and conductivity equation on $\R^3$\label{sec:mainvekua}}

The Vekua equation, whose theory was introduced in
\cite{Bers1953,Vekua1959} for functions in $\R^2$, plays an important
role in the theory of pseudo-analytic functions (sometimes called
generalized analytic functions). We will study a special Vekua
equation, which in \cite{Krav2009} is called the main Vekua
equation. We are interested in the natural generalization of this
equation to the quaternionic case \cite[Ch.\ 16]{Krav2009}, which
possesses properties similar to those of the complex Vekua equation,
including an intimate relation with the conductivity equation. The conductivity
equation appears in many aspects of physics, and gives rise to inverse
problems with applications to fields such as tomography.  Here we
apply the results obtained on the div-curl system to study solutions
of these equations.

\subsection{The main Vekua equation and equivalent formulations}
 
The \textit{main Vekua equation} is
\begin{align}\label{ecuacion_Vekua}
   DW=\frac{Df}{f}\overline W,
\end{align}
with $D$ the Moisil-Teodorescu operator given in
(\ref{operador_Dirac}), and $f$ a nonvanishing smooth function.  We are
interested in solutions $W=W_0+\vec W\in C^1(\Omega,\H)$ for a domain
$\Omega\subseteq\R^3$.

The operator $D-(Df/f)C_{\H}$ corresponding to (\ref{ecuacion_Vekua}), and
similar expressions, appear in various factorizations. For example,
when $u$ is scalar,
\begin{align*}
  \nabla \cdot f^2 \nabla u =
   -f\left(D+M^{\frac{Df}{f}}\right)\left(D-\frac{Df}{f}C_{\H}\right)fu,\\
  \left(\Delta-\frac{\Delta f}{f}\right)u =
   \left(D+M^{\frac{Df}{f}}\right)\left(D-M^{\frac{Df}{f}}\right)u,
\end{align*}
where $C_{\H}$ is the quaternionic conjugate operator and $M^{a}$
denotes the operator of multiplication on the right by the function
$a$. 

The set  
\begin{align}\label{espacio_soluciones}
 \M_f(\Omega) = \left\{W\colon\  DW = 
   \frac{Df}{f}\overline{W}\right\}\subseteq C^1(\Omega,\H)
\end{align}
of solutions of the main Vekua equation (\ref{ecuacion_Vekua}) is a
nontrivial linear subspace over $\R$.  
In \cite[Chapter 16]{Krav2009} we find results that relate
solutions of the main Vekua equation to solutions of other
differential equations. In particular, (\ref{ecuacion_Vekua}) is
related to the $\R$-linear Beltrami equation, a fact which was
essential in the solution of the Calder\'{o}n problem in the complex
case \cite{AP2006}. A similar fact is the following.

\begin{lema}\cite[Th.\ 161]{Krav2009} \label{lem:system_1}
  $W\in\M_f(\Omega)$ if and only if the scalar part $W_0$ and the
  vector part $\vec W$ satisfy the homogeneous div-curl
  system
\begin{align}\label{system_1} 
  \div(f\vec W) &= 0, \nonumber\\
  \curl(f\vec W) &= -f^2\nabla \left(\frac{W_0}{f}\right).
\end{align}
\end{lema}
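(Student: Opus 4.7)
The plan is to expand both sides of the main Vekua equation $DW=(Df/f)\overline W$ using the decomposition (\ref{formula_Dirac}) and the quaternionic product rule for a vector times a quaternion, then separate scalar and vector parts. Since $f$ is scalar, $Df/f=\grad f/f$ is purely vectorial, and with $\overline W=W_0-\vec W$ the right-hand side becomes
\[
   \frac{Df}{f}\overline W
    \;=\; \frac{W_0}{f}\grad f+\frac{1}{f}\grad f\cdot\vec W-\frac{1}{f}\grad f\times\vec W,
\]
using the identity $\vec a\,b=-\vec a\cdot\vec b+b_0\vec a+\vec a\times\vec b$. The left-hand side, by (\ref{eq:Dvec}), is $DW=-\div\vec W+\grad W_0+\curl\vec W$.

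First I would equate scalar parts, obtaining $-\div\vec W=(1/f)\grad f\cdot\vec W$, which after multiplication by $f$ and use of the Leibniz rule $\div(f\vec W)=f\div\vec W+\grad f\cdot\vec W$ yields the first equation of the system, $\div(f\vec W)=0$. Next I would equate vector parts, obtaining
\[
   \grad W_0+\curl\vec W \;=\; \frac{W_0}{f}\grad f-\frac{1}{f}\grad f\times\vec W.
\]
Rewriting the curl term via $\curl(f\vec W)=f\curl\vec W+\grad f\times\vec W$ gives $(1/f)\curl(f\vec W)=\curl\vec W+(1/f)\grad f\times\vec W$, so multiplying the rearranged vector equation by $f$ leads to $\curl(f\vec W)=W_0\grad f-f\grad W_0$. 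Finally, the elementary identity $f^2\grad(W_0/f)=f\grad W_0-W_0\grad f$ shows that the right-hand side equals $-f^2\grad(W_0/f)$, which is precisely the second equation of (\ref{system_1}).

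For the converse, the computations are fully reversible: starting from (\ref{system_1}) one recovers the scalar and vector components of $DW-(Df/f)\overline W$ and sees that they both vanish. The main obstacle is simply keeping track of the signs in the quaternionic product and in the vector-analysis product rules; no analytic machinery is required, and both directions follow from the same chain of algebraic manipulations applied in opposite order.
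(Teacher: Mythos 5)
Your computation is correct: expanding $DW$ and $(Df/f)\overline W$ via the quaternionic product rule, separating scalar and vector parts, and converting with the Leibniz identities for $\div(f\vec W)$ and $\curl(f\vec W)$ yields exactly the system (\ref{system_1}), and every step is an equivalence, so the converse follows. The paper itself gives no proof, citing \cite[Th.\ 161]{Krav2009}; your argument is the standard one and is consistent with the equivalent formulation (\ref{system_3}) stated later in the same section.
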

This system reduces to (\ref{monogenicas_izquierda}) when $f$ is
constant.  Thus it is natural to wish to generalize results concerning
monogenic functions to solutions of the main Vekua equation.  This is
one of our main goals in this section.

Suppose that $W\in C^2(\Omega,\H)$.  From the second equation of
(\ref{system_1}) we obtain by applying div, curl that
\begin{align} 
  \nabla \cdot f^2\,\nabla \left(\frac{W_0}{f}\right) &= 0,  \label{eq:conductivity}\\
  \curl (f^{-2} \curl (f\vec W )) &= 0. \label{eq:doublerot} 
\end{align}
The first equation is the so-called \textit{conductivity equation} and
the second one is called the \textit{double curl-type equation} for the
conductivity $f^2$.  These equations are satisfied separately by the
scalar and vector parts of $W$ in analogously to the way that two
harmonic conjugates satisfy separately the Laplace equation; together
they are not sufficient for $W_0+\vec W$ to satisfy (\ref{system_1}).
The conductivity equation is equivalent to the Schr\"odinger equation
\[ \Delta W_0 - \frac{\Delta f}{f}W_0 = 0. \]

Using (\ref{formula_Dirac}) and the fact that $f\vec W$ is vectorial,
we have the equivalence
\begin{align}\label{system_3}
  DW = \frac{Df}{f}\overline{W}  \ \iff \
   D(f\vec W) = -f^2\nabla{\left(\frac{W_0}{f}\right)}.
\end{align}
 
For brevity we will say that $f^2$ is a \textit{conductivity} when $f$
is a non-vanishing $\R$-valued function in the domain under
consideration. The conductivity will be called \textit{proper} when
$f$ and $1/f$ are bounded.
 
\subsection{Completion of Vekua solutions from partial data}

It is important to know what type of functions can be solutions to
some main Vekua equation (i.e., for some $f$).  Another question is
how to complete an $f^2$-hyperconjugate pair, i.e.\ to recover the
vector part $\vec W$ such that $W=W_0+\vec W\in\M_f(\Omega)$ when the
scalar part $W_0\colon\Omega\to \R$ is known, or vice versa.  We
apply the results of Section \ref{sec:div-curl} to these questions.
First we treat the generalization of Proposition
\ref{completacion_monogenicas} for nonconstant conductivity.

\begin{teorema}\label{teorema_extension}
  Let $f^2$ be a conductivity of class $C^2$ in an open
  star-shaped set $\Omega\subseteq\R^3$. Suppose that
  $W_0\in C^2(\Omega,\R)$ satisfies the conductivity equation
  (\ref{eq:conductivity}) in $\Omega$. Then there exists a function $\vec W$
  such that $W_0+\vec W\in\M_f(\Omega)$.  The function $f\vec W$ is
  unique up to a purely vectorial additive monogenic constant, i.e.,
  the gradient of a real harmonic function.
\end{teorema}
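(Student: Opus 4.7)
The plan is to translate the problem via Lemma \ref{lem:system_1} into an inhomogeneous div-curl system for the auxiliary field $\vec V = f\vec W$ and then invoke Theorem \ref{theorem_div_rot}. Concretely, finding $\vec W$ with $W_0+\vec W\in\M_f(\Omega)$ is equivalent, by Lemma \ref{lem:system_1}, to finding $\vec V$ satisfying
\begin{align*}
  \div \vec V &= 0,\\
  \curl \vec V &= -f^{2}\nabla\!\left(\frac{W_0}{f}\right).
\end{align*}
This is precisely the homogeneous-divergence case of (\ref{div_rot_system}) with $g_0=0$ and $\vec g = -f^{2}\nabla(W_0/f)$.

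The first step is to verify the compatibility condition required by Theorem \ref{theorem_div_rot}, namely that $\vec g\in\Sol(\Omega,\R^3)$. But $\div\vec g = -\nabla\cdot (f^{2}\nabla(W_0/f))$, which vanishes by hypothesis because $W_0$ satisfies the conductivity equation (\ref{eq:conductivity}). Since $f\in C^{2}$ and $W_0\in C^{2}$, the field $\vec g$ is continuous, and on passing to a relatively compact subdomain (or assuming global boundedness, as elsewhere in the paper) we may apply Corollary \ref{corollary_inverse_rot} to obtain
\[ \vec V = \Tiii[\vec g] - \vec S_{\Omega}[\Ti[\vec g]] + \nabla h \]
for any $h\in\Har(\Omega,\R)$. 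Setting $\vec W = \vec V/f$ produces the desired completion, because $f$ is non-vanishing.

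For uniqueness, suppose $W_0+\vec W^{(1)}$ and $W_0+\vec W^{(2)}$ both lie in $\M_f(\Omega)$. Writing $\vec U = f(\vec W^{(1)}-\vec W^{(2)})$ and subtracting the two instances of system (\ref{system_1}) gives $\div \vec U = 0$ and $\curl \vec U = 0$. Thus $\vec U$ is a purely vectorial monogenic constant. Since $\Omega$ is star-shaped and hence simply connected, Proposition \ref{prop:completingvecw} (or equivalently the antigradient operator $\A$) produces a real harmonic function $h$ with $\vec U = \nabla h$, giving the stated uniqueness up to the gradient of a real harmonic function.

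The only subtle point is the verification that the Corollary applies in the present setting, and the key observation that makes the whole argument work is that the conductivity equation on $W_0/f$ is exactly the solenoidal condition required by Theorem \ref{theorem_div_rot} for the prescribed curl datum $\vec g = -f^{2}\nabla(W_0/f)$. Everything else is a straightforward translation between Vekua and div-curl formulations via Lemma \ref{lem:system_1}.
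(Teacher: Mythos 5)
Your proposal is correct and follows essentially the same route as the paper: both reduce the problem via Lemma \ref{lem:system_1} to the homogeneous div-curl system for $f\vec W$ with datum $\vec g=-f^2\nabla(W_0/f)$, observe that the conductivity equation is exactly the solenoidality condition, and invoke Corollary \ref{corollary_inverse_rot}. Your explicit uniqueness argument (differences are purely vectorial monogenic constants, hence gradients of harmonic functions on a simply connected domain) is the same justification the paper leaves implicit in the phrase ``general solution.''
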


\begin{proof}
  Observe that (\ref{system_1}) is a homogeneous div-curl system
  (\ref{div_rot_system}) in the unknown $\vec w=f\vec W$, with $g_0=0$
  and $\vec g = -f^2\nabla(W_0/f)\in\Sol(\Omega,\R^3)$, since by
  hypothesis $W_0$ satisfies (\ref{eq:conductivity}). By Corollary
  \ref{corollary_inverse_rot}, the general solution $\vec W$ is given
  by
\begin{align}\label{completacion_vectorial}
  f\vec W = \Tiii \left[-f^2\nabla\left(\frac{W_0}{f}\right)\right] +
  \vec S_{\Omega}\left[\Ti \left[f^2\nabla\left(\frac{W_0}{f}\right)\right]\right]
  +\nabla h,
\end{align}
where $h$ is an arbitrary harmonic function. By Lemma \ref{lem:system_1},
$W_0+\vec W\in\M_f(\Omega)$.
\end{proof}

Of course, the assumptions of Theorem \ref{teorema_extension} can be
relaxed to say that $f,W_0\in H^1(\Omega,\R)$ and
(\ref{eq:conductivity}) is satisfied weakly. Then $\vec W\in H^1(\Omega,\R^3)$
given by (\ref{completacion_vectorial}) produces a weak solution
$W_0+\vec W$ of the main Vekua equation.

Analogously to the well-known $\overline{\partial}$-problem in the
complex case, the completion of the vector part of solutions of the
main Vekua equation is given in terms of the integral operator
$T_{\Omega}$. In \cite{PorShapVas1993} there is a generalization for the
quaternionic case; however, the solution given there is not
purely vectorial.

\subsection{A result on Vekua-type operators\label{sec:3dpseudo}}

As was noted at the beginning of this section, there are other
operators quite similar to the Vekua operator which appear in
factorizations of second order operators.
We write $D_r[f]=fD$ for the right-sided operator of
(\ref{formula_Dirac}). In \cite{KravTremb2011}, certain relations
were established among the operators
\begin{align*}
  \V    = D-\frac{Df}{f}C_{\H}, \quad &
  \overline{\V} = D_r-M^{\frac{Df}{f}}C_{\H}, \\
  \V_1   = D_r+\frac{Df}{f}, \quad &
  \overline{\V}_1  = D+M^{\frac{Df}{f}} .
\end{align*}
where $M^{Df/f}$ denotes right multiplication. The following is a
a right inverse of the operator $\overline{\V}$ on a subspace
analogous to the condition that $\vec g\in\Sol(\Omega,\R^3)$ for
(\ref{div_rot_system}).
 
\begin{teorema}\label{antiderivative}
  Let $f^2$ be a conductivity in the star-shaped open set
  $\Omega\subseteq\R^3$. Let $\vec G\in C^1(\Omega,\R^3)$ be a purely
  vectorial solution of the equation $\overline{\V}_1\vec G=0$. Then
  the general solution of the system $\V W=0$ and
  $\overline{\V}W=\vec G$ in $\Omega$ is given by
\begin{align*}
  W = \frac{1}{2} \left(f\A\left[\frac{\vec G}{f}\right] -
  \frac{1}{f} \Tiii[f\vec G] +
  \frac{1}{f}\vec S_{\Omega}[\Ti[f\vec G]] +
  \frac{\nabla h}{f}\right),
\end{align*}
where $h\in\Har(\Omega,\R)$ is arbitrary.
\end{teorema}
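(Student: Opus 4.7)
The plan is to reduce the coupled quaternionic system $\V W = 0$, $\overline{\V}W = \vec G$ to the decoupled pair consisting of a scalar gradient equation for $W_0/f$ and an inhomogeneous div-curl system for $f\vec W$; these two pieces are then exactly what the potential operator $\A$ of (\ref{antigradient}) and Corollary \ref{corollary_inverse_rot} are designed to solve.

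First I would rewrite the operators concretely. Writing $\vec a = Df/f$, which is vectorial since $f$ is real and nonvanishing, the system becomes $DW = \vec a\,\overline W$ and $WD = \overline W\,\vec a + \vec G$. Taking the sum and the difference and using (\ref{formula_Dirac}) together with the quaternion product $\vec u\vec v = -\vec u\cdot\vec v + \vec u\times\vec v$, I would separate scalar and vector parts. The computation leans on the identities $\div(f\vec W) = f\div\vec W + (\grad f)\cdot\vec W$, $\curl(f\vec W) = f\curl\vec W + (\grad f)\times\vec W$, and $f\grad(W_0/f) = \grad W_0 - (W_0/f)\grad f$. The outcome is that the original system is equivalent to the three real equations
\begin{align*}
 \div(f\vec W) &= 0, \\
 \curl(f\vec W) &= -\frac{1}{2}f\vec G, \\
 \grad(W_0/f) &= \frac{\vec G}{2f}.
\end{align*}

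Next I would verify the solvability conditions using $\overline{\V}_1\vec G = 0$. Expanding $D\vec G + \vec G(Df/f) = 0$ and splitting into parts gives $\div(f\vec G) = 0$ and $\curl\vec G = (\grad f/f)\times\vec G$; substituting the latter into $\curl(\vec G/f) = (\curl\vec G)/f - (\grad f/f^2)\times\vec G$ yields $\curl(\vec G/f) = 0$. Thus $-\frac{1}{2}f\vec G\in\Sol(\Omega,\R^3)$ and $\vec G/f\in\Irr(\Omega,\R^3)$, which are exactly the hypotheses needed in the next step.

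Finally I would integrate. The gradient equation gives $W_0 = (f/2)\A[\vec G/f]$ by the definition of $\A$, and the div-curl system for $f\vec W$, with $g_0 = 0$ and $\vec g = -f\vec G/2$, has general solution
\[
 f\vec W = -\frac{1}{2}\Tiii[f\vec G] + \frac{1}{2}\vec S_\Omega[\Ti[f\vec G]] + \nabla h
\]
by Corollary \ref{corollary_inverse_rot}, with $h\in\Har(\Omega,\R)$ arbitrary. Assembling $W = W_0 + \vec W$ and absorbing a factor of $2$ into $h$ yields the stated formula. The main obstacle is the bookkeeping in the decoupling step: because quaternion multiplication is noncommutative, one must carefully track products like $\vec a\,\overline W$ and $\overline W\,\vec a$ so that the two quaternionic equations $\V W = 0$, $\overline{\V}W = \vec G$ collapse to precisely the three independent real conditions above (the scalar parts of $\V W$ and $\overline{\V}W$ in fact coincide, which is why one obtains three rather than four). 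Once that decoupling is secured, the remainder is a direct invocation of Corollary \ref{corollary_inverse_rot} and the definition of $\A$.
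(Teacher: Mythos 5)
Your proposal is correct, and it follows exactly the route the paper intends: the paper itself gives no proof of this theorem, deferring to \cite[Theorem 9]{KravTremb2011} with the remark that the argument is the same once the inverse of curl is replaced by $\Tiii-\vec S_{\Omega}\Ti$, which is precisely the reduction you carry out (decoupling into $\grad(W_0/f)=\vec G/(2f)$ plus the homogeneous div-curl system $\div(f\vec W)=0$, $\curl(f\vec W)=-\tfrac12 f\vec G$, with the solvability conditions $\div(f\vec G)=0$ and $\curl(\vec G/f)=0$ extracted from $\overline{\V}_1\vec G=0$). Your bookkeeping of the quaternionic products checks out, so the only cosmetic remarks are that the additive constant in $\A$ and the factor of $2$ are absorbed into the arbitrary data, and that invoking Corollary \ref{corollary_inverse_rot} tacitly uses the boundedness of $f\vec G$, a hypothesis the theorem statement itself leaves implicit.
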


A similar formula was given in \cite[Theorem 9]{KravTremb2011} but
with an incorrect expression in place of $\Tiii-\vec S_{\Omega}\Ti$
for the inverse of curl. Otherwise the proof is essentially the same.

\subsection{ Vekua boundary value problems\label{subsec:solvekcond}}

The following fact is essential to the solution of the Calder\'on
problem in the plane \cite{AP2006}; see  \cite[Th.\ 4.1]{Isakov1998}
and the references therein, and a sketch of a proof in
$\R^n$ in \cite[p.\ 407]{AIM2009}.  The following conductivity problem reduces 
to the Dirichlet problem in the case where $f$ is constant.

\begin{proposicion}\label{ec_conductividad}
  Let $\Omega$ be a bounded domain of $\R^n$ with connected
  complement and $f^2$ a measurable proper conductivity in
  $\Omega$. Given prescribed boundary values
  $\varphi\in H^{1/2}(\partial\Omega,\R)$, there is a unique solution 
    $u\in H^1(\Omega,\R)$ to the conductivity boundary value problem
 \begin{align*}
     \nabla \cdot f^2\,\nabla u  &= 0, \\
   u|_{\partial\Omega} &= \varphi .
\end{align*}
\end{proposicion}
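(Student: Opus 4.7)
The plan is to treat this as the standard variational problem for a divergence-form elliptic equation, using the boundedness of $f$ and $1/f$ to obtain uniform ellipticity and then invoking Lax--Milgram. First I would introduce the bilinear form
\[
  B(u,v) = \int_{\Omega} f^2\, \nabla u\cdot\nabla v\,dx
\]
on $H^1(\Omega,\R)\times H^1(\Omega,\R)$. Since $f^2$ is a proper conductivity, there exist constants $0<c\le C<\infty$ with $c\le f^2\le C$ a.e., so $B$ is bounded and $B(v,v)\ge c\|\nabla v\|_{L^2}^2$ on $H_0^1(\Omega,\R)$. By Poincaré's inequality (valid because $\Omega$ is bounded), $\|\nabla\cdot\|_{L^2}$ is equivalent to the $H^1$-norm on $H_0^1(\Omega,\R)$, so $B$ is coercive on $H_0^1(\Omega,\R)$.

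Next I would reduce to zero boundary data. By the definition of $H^{1/2}(\partial\Omega,\R)$, there exists $u_0\in H^1(\Omega,\R)$ with $u_0|_{\partial\Omega}=\varphi$. Writing $u=u_0+v$, the problem becomes: find $v\in H_0^1(\Omega,\R)$ satisfying
\[
  B(v,w) = -B(u_0,w)\qquad \forall\, w\in H_0^1(\Omega,\R),
\]
where the right-hand side $w\mapsto -B(u_0,w)$ is a bounded linear functional on $H_0^1(\Omega,\R)$ because $|B(u_0,w)|\le C\|u_0\|_{H^1}\|w\|_{H^1}$. The Lax--Milgram theorem then produces a unique $v\in H_0^1(\Omega,\R)$ solving this variational identity, and $u=u_0+v$ is the desired weak solution: the identity $B(u,w)=0$ for all $w\in C_0^\infty(\Omega,\R)$ is precisely the weak form of $\nabla\cdot f^2\nabla u=0$.

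Uniqueness follows by the same coercivity: if $u_1,u_2$ are two solutions with the same boundary trace $\varphi$, then $v=u_1-u_2\in H_0^1(\Omega,\R)$ and $B(v,w)=0$ for all $w\in H_0^1(\Omega,\R)$; taking $w=v$ gives $c\|\nabla v\|_{L^2}^2\le 0$, hence $\nabla v=0$, and Poincaré forces $v=0$. The assumption that $\R^n\setminus\overline{\Omega}$ is connected is not needed for this existence/uniqueness step itself but is natural in the Calderón context mentioned immediately before the proposition. The only real point requiring care is verifying that a representative $u_0$ of $\varphi$ truly exists in $H^1(\Omega,\R)$, which is the content of the surjectivity of the trace operator $H^1(\Omega,\R)\to H^{1/2}(\partial\Omega,\R)$ under the smoothness assumption on $\partial\Omega$ already imposed in the paper; this is the step where one must invoke a nontrivial classical fact, and is the main (if routine) technical input.
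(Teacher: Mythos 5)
Your argument is correct. Note, however, that the paper does not actually prove this proposition: it is quoted from the literature (Isakov, Astala--Iwaniec--Martin), with the remark that the variational machinery of Section~\ref{subsec:variational} could be used instead --- i.e.\ minimize the energy $u\mapsto\int_\Omega f^2\,|\nabla u|^2\,dx$ over the closed convex set $[\varphi]+H_0^1(\Omega,\R)$ and invoke weak lower semicontinuity plus coercivity (Corollary~\ref{corolario_inf}), exactly as is done later for the double curl equation. Your route via Lax--Milgram is the standard alternative: after lifting the boundary data you get existence and uniqueness in one stroke, whereas the direct method gives existence of a minimizer and needs the same coercivity computation separately for uniqueness (and, as the paper observes for the curl functional, uniqueness can genuinely fail there --- it holds here only because $\nabla v=0$ on $H_0^1$ forces $v=0$). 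Since the bilinear form is symmetric and positive, the two methods are essentially interchangeable in this instance; Lax--Milgram would be the more robust choice if the form were non-symmetric. Two small points: the existence of the lift $u_0$ is immediate from the paper's own definition of $H^{1/2}(\partial\Omega,\R)$ as the set of traces of $H^1(\Omega,\R)$ functions, so no appeal to the surjectivity of the trace operator is needed; and you are right that the connectedness of the complement plays no role in the existence/uniqueness argument.
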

 
The methods of variational calculus applied in Section
\ref{subsec:variational}  can be used to obtain the
existence of solutions of second-order
elliptic equations such as this one.

 \begin{teorema}\label{mainVekuacompletion}
		Let $f^2$ be a proper conductivity in the bounded, star-shaped
    open set $\Omega\subseteq\R^3$, $f\in H^1(\Omega,\R)$ and
    suppose that $\varphi\in H^{1/2}(\partial\Omega,\R)$.  Then there
    exists a function $W\colon\Omega\to\H$ that satisfies the main
    Vekua equation (\ref{ecuacion_Vekua}) weakly and has boundary values
    $\Sc W|_{\partial\Omega}=\varphi$.
\end{teorema}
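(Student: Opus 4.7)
The plan is to decompose the construction of the Vekua solution $W=W_0+\vec W$ into a scalar step and a vector step, exploiting the equivalence provided by Lemma \ref{lem:system_1} between the main Vekua equation and the pair consisting of the conductivity equation (\ref{eq:conductivity}) for $W_0$ together with the div-curl system (\ref{system_1}) relating $W_0$ and $\vec W$.

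First, I would build the scalar part $W_0$. By Lemma \ref{lem:system_1}, $W_0$ must satisfy the conductivity equation, i.e., the function $u:=W_0/f$ must be a weak solution of $\nabla\cdot f^2\nabla u=0$. Since I want $W_0|_{\partial\Omega}=\varphi$, the correct boundary datum for $u$ is $\psi:=(\varphi/f)|_{\partial\Omega}$. Properness of $f^2$ together with $f\in H^1(\Omega,\R)$ yields that $f|_{\partial\Omega}$ is a trace in $H^{1/2}(\partial\Omega,\R)$ bounded uniformly away from $0$, from which one checks that $\psi\in H^{1/2}(\partial\Omega,\R)$. Proposition \ref{ec_conductividad} then produces a unique $u\in H^1(\Omega,\R)$ solving the conductivity boundary-value problem with data $\psi$. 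Setting $W_0:=fu$, multiplication by the bounded $H^1$ function $f$ gives $W_0\in H^1(\Omega,\R)$ and, by construction, $W_0$ satisfies (\ref{eq:conductivity}) weakly with $W_0|_{\partial\Omega}=\varphi$.

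Next, I would build the vector part $\vec W$ by applying Theorem \ref{teorema_extension} in its Sobolev version (as indicated in the remark immediately following that theorem). Since $W_0$ weakly satisfies the conductivity equation, the field $\vec g:=-f^2\nabla(W_0/f)=-f^2\nabla u$ is weakly solenoidal, so the div-curl system (\ref{system_1}) in the unknown $\vec w=f\vec W$ is a homogeneous system (\ref{div_rot_system}) with datum $(g_0,\vec g)=(0,\vec g)$ satisfying the hypotheses of Theorem \ref{theorem_div_rot}. Taking the harmonic freedom $h=0$ in (\ref{completacion_vectorial}), I define
\[
\vec W:=\frac{1}{f}\left(\Tiii\!\left[-f^2\nabla\!\tfrac{W_0}{f}\right]+\vec S_{\Omega}\!\left[\Ti\!\left[f^2\nabla\!\tfrac{W_0}{f}\right]\right]\right).
\]
Invoking Lemma \ref{lem:system_1} in reverse, $W:=W_0+\vec W$ then solves (\ref{ecuacion_Vekua}) weakly, and the boundary identity $\Sc W|_{\partial\Omega}=W_0|_{\partial\Omega}=\varphi$ holds by construction.

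The main obstacle is entirely contained in Step 1: one must verify the Sobolev-space bookkeeping, namely that the trace $(\varphi/f)|_{\partial\Omega}$ lies in $H^{1/2}(\partial\Omega,\R)$ and that the product $fu$ remains in $H^1(\Omega,\R)$. Both follow from properness of the conductivity (so $f,1/f\in L^\infty$) combined with the hypothesis $f\in H^1(\Omega,\R)$ and the chain/product rule for Sobolev functions with bounded Lipschitz-like factors. Once these are checked, Step 2 is a direct invocation of the machinery already developed in Section \ref{sec:div-curl} and extended to Sobolev spaces via Weyl's lemma and the mapping properties of $T_\Omega$, $\Ti$, $\Tiii$, and $\vec S_\Omega$.
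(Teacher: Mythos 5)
Your proposal is correct and follows essentially the same route as the paper's own proof: solve the conductivity boundary value problem for $u$ with boundary datum $\varphi/f$ via Proposition \ref{ec_conductividad}, set $W_0=fu$, and complete $W_0$ to a weak solution of the main Vekua equation via Theorem \ref{teorema_extension}. The only difference is that you spell out the Sobolev bookkeeping (that $\varphi/f\in H^{1/2}(\partial\Omega,\R)$ and $fu\in H^1(\Omega,\R)$, using properness of the conductivity), which the paper asserts without comment.
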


\begin{proof} Proposition \ref{ec_conductividad} gives a solution
  $u\in H^1(\Omega,\R)$ of $\nabla\cdot f^2\,\nabla u=0$ with boundary
  values
  $u|_{\partial\Omega}=\varphi/f\in H^{1/2}(\partial\Omega,\R)$.  The
  function $W_0=fu$ satisfies the conditions of Theorem
  \ref{teorema_extension} and therefore has a completion $W_0+\vec W$
  satisfying the Vekua equation weakly.
 \end{proof}

\begin{remark}
  Theorem \ref{mainVekuacompletion} provides a way to define a ``Hilbert
  transform''    
   \[ \mathcal{H}_f \colon H^{1/2}(\partial\Omega, \R) \to H^{1/2}(\partial\Omega, \R^3)  \]
  associated to the main Vekua equation  (\ref{ecuacion_Vekua}), by
  \[\mathcal{H}_f[\varphi] = \vec W|_{\partial\Omega}, \]
where $\vec W$ is given by (\ref{completacion_vectorial}).
\end{remark} 

We now characterize the elements of the space $\Vec\M_f(\Omega)$ of
vector parts of solutions to the main Vekua equation, that is the
$f^2$-hyperconjugate pair.
  
\begin{proposicion}\cite[Th.\ 10]{KravTremb2011}
  Let $\vec W\in C^2(\Omega, \R^3)$ where $\Omega$ is a simply
  connected domain in $\R^3$. For the existence of $W\in\M_f(\Omega)$
  such that $\Vec W=\vec W$ it is necessary and sufficient that
  $\div(f\vec W) = 0$ together with the double curl-type equation
  (\ref{eq:doublerot}). 
\end{proposicion}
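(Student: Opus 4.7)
The plan is to invoke Lemma \ref{lem:system_1} to reduce the existence of $W\in\M_f(\Omega)$ with prescribed vector part $\vec W$ to solving the inhomogeneous div-curl-type system
\begin{align*}
  \div(f\vec W) &= 0, \\
  \curl(f\vec W) &= -f^2 \nabla\!\left(\frac{W_0}{f}\right)
\end{align*}
for the scalar unknown $W_0$. Once the problem is phrased this way, both directions become short.

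For necessity, assume $W=W_0+\vec W\in\M_f(\Omega)$. Then the system above holds, so $\div(f\vec W)=0$ immediately. Dividing the second equation by $f^2$ yields $f^{-2}\curl(f\vec W)=-\nabla(W_0/f)$, and applying \curl to a gradient gives $\curl(f^{-2}\curl(f\vec W))=0$, which is (\ref{eq:doublerot}).

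For sufficiency, assume the two stated conditions. The double-curl hypothesis says that the vector field $\vec U := -f^{-2}\curl(f\vec W)$ is irrotational on $\Omega$. Since $\Omega$ is simply connected, the antigradient operator $\A$ defined in (\ref{antigradient}) produces a scalar potential $u:=\A[\vec U]\in C^2(\Omega,\R)$ with $\nabla u = \vec U$. Set $W_0 := f\,u$. Then by construction
\begin{align*}
  -f^2\nabla\!\left(\frac{W_0}{f}\right) = -f^2\nabla u = -f^2\vec U = \curl(f\vec W),
\end{align*}
so the second equation of the system is satisfied, and the first equation is just one of the hypotheses. By Lemma \ref{lem:system_1}, $W:=W_0+\vec W\in\M_f(\Omega)$ with $\Vec W=\vec W$.

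The only subtlety, which I would call out explicitly, is the use of simple connectedness to pass from an irrotational field to a single-valued potential via $\A$; this is precisely where the $\curl$-free condition $\curl(f^{-2}\curl(f\vec W))=0$ and the topology of $\Omega$ combine to allow the construction of $W_0$. Uniqueness of $W_0$ up to an additive real constant follows from $\nabla(W_0/f)$ being determined by $\vec W$, but the statement does not require this.
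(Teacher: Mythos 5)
Your proof is correct and follows essentially the same route as the paper: necessity via Lemma \ref{lem:system_1}, and sufficiency by observing that the double-curl condition makes $f^{-2}\curl(f\vec W)$ irrotational, applying the antigradient $\A$ on the simply connected $\Omega$ to get a potential, and feeding the resulting scalar part back into the system (\ref{system_1}). If anything, you are slightly more careful than the paper's own one-line sufficiency argument, since you make explicit that the potential produced by $\A$ is $-W_0/f$ rather than $W_0$ itself and then rescale by $f$.
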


\begin{proof}
  The necessity is given by (\ref{system_1}). For the sufficiency, the
  second condition implies that $f^{-2}\,\curl(f\vec W)$ admits a
  potential $W_0$ obtained by applying $\A$ of
  (\ref{antigradient}). The function $W=W_0+\vec W$ then satisfies
  (\ref{system_1}) and hence also (\ref{ecuacion_Vekua}).
\end{proof}

\section{Equation of double curl type \label{sec:doublerot}}

The following system of equations corresponds to the static Maxwell
system, in a medium when just the permeability $f^2$ is variable
(\cite[Ch.\ 4]{Krav2003} or \cite[Ch.\ 2]{Boss1998}):
\begin{align}   
	\div (f^2\vec{H}) & = 0,   \nonumber\\
  \div\vec E &= 0,             \nonumber\\ 
	\curl\vec H &= \vec g ,     \nonumber\\
  \curl\vec E &= f^2 \vec H.  \label{system_4}
\end{align}
Here $\vec E$ and $\vec H$ represent electric and magnetic fields,
respectively. We will apply our
results to this system and to the double curl-type equation 
\begin{align}\label{double}
  \curl (f^{-2}\,\curl\vec E) = \vec g ,
\end{align}
which is immediate from the last two equations of (\ref{system_4}).
 
\subsection{Generalized solutions of the Maxwell system
\label{subsec:generalized}}

To obtain a general solution of (\ref{system_4}) we will use the
existence of solutions of the inhomogeneous conductivity  problem
\begin{align} 
 \div(f^2\nabla W_0) &= g_0, \nonumber \\
  {W_0}|_{\partial\Omega}&=\varphi. \label{eq_conductivity_nozero}
\end{align}

\begin{teorema}[{\cite[p.\ 197, Th. 10]{Mikh1978}}]
  \label{theorem_conductivity_nozero}  
  Suppose that $f^2$ is a continuous proper conductivity in $\Omega$
  and $g_0\in L^2(\Omega,\R)$. Let
  $\varphi\in H^{1/2}(\partial\Omega,\R)$. Then there exists a unique
  generalized solution $W_0\in H^1(\Omega,\R)$ to the boundary value
  problem (\ref{eq_conductivity_nozero}). Furthermore, $W_0$ satisfies
\begin{align*}
  \|W_0\|_{H^1(\Omega)} \leq c(\|g_0\|_{L^2(\Omega)} +
  \inf\{ \|v\|_{H^1(\Omega)}\colon\ v\in H^1(\Omega,\R) \text{ and }
    v|_{\partial\Omega} = \varphi \}) 
\end{align*}
for some constant $c$ which does not depend on $g_0$ or $\varphi$.
\end{teorema}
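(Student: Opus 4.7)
The plan is to reduce the inhomogeneous Dirichlet problem to a homogeneous one and then apply the Lax--Milgram theorem, which is the standard variational approach for this type of divergence-form second-order elliptic equation.

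First, I would lift the boundary data. Pick any $v\in H^1(\Omega,\R)$ with $v|_{\partial\Omega}=\varphi$ (such $v$ exists by definition of $H^{1/2}(\partial\Omega,\R)$). Set $u=W_0-v$. Then $u$ belongs to $H^1_0(\Omega,\R)$ if and only if $W_0|_{\partial\Omega}=\varphi$, and the equation $\div(f^2\nabla W_0)=g_0$ becomes
\begin{equation*}
    \div(f^2\nabla u)=g_0-\div(f^2\nabla v)
\end{equation*}
in the weak sense. Notice $\div(f^2\nabla v)\in H^{-1}(\Omega)$ because $f$ is bounded and $\nabla v\in L^2(\Omega,\R^3)$, so the right-hand side is a bounded linear functional on $H_0^1(\Omega,\R)$.

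Next, I would apply Lax--Milgram to the bilinear form
\begin{equation*}
    a(u,\phi)=\int_{\Omega} f^2\,\nabla u\cdot\nabla\phi\,dx,\qquad u,\phi\in H_0^1(\Omega,\R),
\end{equation*}
and the linear functional
\begin{equation*}
    \ell(\phi)=-\int_{\Omega}g_0\,\phi\,dx-\int_{\Omega}f^2\,\nabla v\cdot\nabla\phi\,dx.
\end{equation*}
The crucial property is that $f^2$ is a \emph{proper} conductivity, so there exist $0<m\le M<\infty$ with $m\le f^2\le M$ a.e.\ in $\Omega$. Boundedness of $a$ follows from the upper bound $M$ together with Cauchy--Schwarz. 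Coercivity follows from the lower bound $m$ combined with the Poincar\'e inequality on $H_0^1(\Omega)$: $a(u,u)\ge m\|\nabla u\|_{L^2}^2\ge c_{\Omega}\|u\|_{H^1}^2$. Boundedness of $\ell$ follows from Cauchy--Schwarz and the bound on $f^2$. Lax--Milgram then yields a unique $u\in H_0^1(\Omega,\R)$ solving $a(u,\phi)=\ell(\phi)$ for all $\phi\in H_0^1(\Omega,\R)$, which is exactly the weak formulation of the shifted problem.

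Finally, set $W_0=u+v$; this gives existence with the correct boundary values. Uniqueness is immediate because the difference of two solutions lies in $H_0^1(\Omega,\R)$ and satisfies $a(\cdot,\phi)=0$ for all test functions, hence vanishes by coercivity. For the a priori estimate, Lax--Milgram gives $\|u\|_{H^1}\le C(\|g_0\|_{L^2}+\|v\|_{H^1})$, so $\|W_0\|_{H^1}\le\|u\|_{H^1}+\|v\|_{H^1}\le C'(\|g_0\|_{L^2}+\|v\|_{H^1})$. Taking the infimum over all admissible lifts $v$ of $\varphi$ yields the stated estimate with a constant $c$ independent of $g_0$ and $\varphi$. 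There is no genuine obstacle here, since the argument is textbook once properness of $f^2$ is invoked; the only point that requires attention is ensuring that the lifting bound is expressed in terms of $\inf\|v\|_{H^1}$ rather than a single choice of $v$, which is handled by taking the infimum at the end.
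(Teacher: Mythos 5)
Your argument is correct and complete: the lift-and-shift reduction to a homogeneous Dirichlet problem, Lax--Milgram for the bilinear form $a(u,\phi)=\int_\Omega f^2\nabla u\cdot\nabla\phi\,dx$ (with boundedness from $f^2\le M$ and coercivity from $f^2\ge m>0$ plus Poincar\'e on $H_0^1$), and the final passage to the infimum over lifts $v$ — which is legitimate precisely because uniqueness makes $W_0$ independent of the chosen lift — all go through. The paper, however, does not prove this statement at all: it is quoted verbatim from Mikhailov's book, so there is no in-paper proof to match. The closest the authors come to indicating a proof is in Section \ref{subsec:variational}, where they state (under ``Further applications'') that the inhomogeneous conductivity equation can be solved by minimizing the energy functional $\varepsilon[W_0]=\int_\Omega f^2\,\nabla W_0\cdot\nabla W_0\,dx+2\int_\Omega g_0W_0\,dx$ over the affine class $[\varphi]+H_0^1(\Omega,\R)$, using the direct method (weak lower semicontinuity plus coercivity, Proposition \ref{teorema_inf}). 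For this symmetric, coercive form your Lax--Milgram route and their energy-minimization route are essentially equivalent; yours is arguably more economical and yields the quantitative a priori bound more directly, while the variational formulation is the one the authors reuse for the nonsymmetric-looking double curl problem and extends more naturally to merely measurable conductivities and to convex-constraint versions of the problem.
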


\begin{corolario}[{\cite[p.\ 173, Th. 1]{Mikh1978} }]
  \label{corollary_conductivity_nozero} 
  Suppose that $f^2$ is a continuous proper conductivity in $\Omega$
  and $g_0\in L^2(\Omega,\R)$.  Then there exists a unique
  generalized solution $W_0\in H_0^1(\Omega,\R)$ to the boundary value
  problem (\ref{eq_conductivity_nozero}). Furthermore,
\begin{align*}
   \|W_0\|_{H_0^1(\Omega)}\leq c \|g_0\|_{L^2(\Omega)},
\end{align*}
for some constant $c$ which does not depend on $g_0$.
\end{corolario}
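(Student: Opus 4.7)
The plan is to deduce this corollary directly from Theorem \ref{theorem_conductivity_nozero}, which is the analogous existence/estimate result for arbitrary boundary data $\varphi \in H^{1/2}(\partial\Omega,\R)$. The corollary is precisely the specialization of that theorem to the homogeneous boundary condition $\varphi = 0$, together with a sharpening of the a priori estimate that the vanishing boundary data allows.

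First I would apply Theorem \ref{theorem_conductivity_nozero} with $\varphi \equiv 0$. Since $f^2$ is a continuous proper conductivity and $g_0\in L^2(\Omega,\R)$, the theorem provides a unique generalized solution $W_0\in H^1(\Omega,\R)$ to $\div(f^2\nabla W_0)=g_0$ with $W_0|_{\partial\Omega}=0$ in the trace sense. Because the trace operator $H^1(\Omega,\R)\to H^{1/2}(\partial\Omega,\R)$ has kernel exactly $H_0^1(\Omega,\R)$, this immediately yields $W_0\in H_0^1(\Omega,\R)$. Uniqueness in $H_0^1(\Omega,\R)$ follows from uniqueness in $H^1(\Omega,\R)$ (a second solution in $H_0^1$ would in particular lie in $H^1$ with the same boundary values).

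For the estimate, I would exploit the infimum term in the bound provided by Theorem \ref{theorem_conductivity_nozero}, namely
\[
  \|W_0\|_{H^1(\Omega)} \leq c\bigl(\|g_0\|_{L^2(\Omega)} + \inf\{\|v\|_{H^1(\Omega)}\colon v\in H^1(\Omega,\R),\ v|_{\partial\Omega}=\varphi\}\bigr).
\]
With $\varphi=0$, the admissible class in the infimum contains $v\equiv 0$, so the infimum equals $0$ and the estimate collapses to $\|W_0\|_{H^1(\Omega)}\leq c\|g_0\|_{L^2(\Omega)}$. Finally, since $W_0\in H_0^1(\Omega,\R)$, the Poincar\'e inequality implies that the $H_0^1$ norm (taken as $\|\nabla W_0\|_{L^2(\Omega)}$) is equivalent to the $H^1$ norm, so the inequality transfers, possibly adjusting the constant $c$, to $\|W_0\|_{H_0^1(\Omega)}\leq c\|g_0\|_{L^2(\Omega)}$, which is independent of $g_0$.

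There is no substantial obstacle here: the corollary is essentially a bookkeeping consequence of the theorem, with the only mild point being the translation between the $H^1$ estimate provided by Theorem \ref{theorem_conductivity_nozero} and the $H_0^1$ estimate claimed, which is handled by the Poincar\'e inequality available because $\Omega$ is bounded. Alternatively, one could give a self-contained proof by applying the Lax--Milgram lemma to the bilinear form $a(u,v)=\int_\Omega f^2\,\nabla u\cdot\nabla v\,dx$ on $H_0^1(\Omega,\R)$, whose boundedness follows from $f$ being bounded and whose coercivity follows from $1/f$ being bounded together with Poincar\'e; but the citation already packages all of this.
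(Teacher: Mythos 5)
Your proposal is correct. Note, however, that the paper itself offers no proof of this statement: both Theorem \ref{theorem_conductivity_nozero} and Corollary \ref{corollary_conductivity_nozero} are simply imported from Mikhailov's book, so there is no argument in the text to compare against. Your deduction of the corollary from the theorem is logically sound: with $\varphi\equiv 0$ the admissible class in the infimum contains $v\equiv 0$, the estimate collapses to $\|W_0\|_{H^1(\Omega)}\leq c\|g_0\|_{L^2(\Omega)}$, membership in $H_0^1(\Omega,\R)$ follows from the characterization of $H_0^1$ as the kernel of the trace operator (available since $\partial\Omega$ is assumed smooth), and the passage between the $H^1$ and $H_0^1$ norms is handled by Poincar\'e (indeed, if $\|\cdot\|_{H_0^1}$ is taken to be $\|\nabla\cdot\|_{L^2}$, the direction you need is trivial and Poincar\'e is not even required). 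One small historical remark: the citations reveal that in Mikhailov the homogeneous-boundary result is Theorem 1 on p.\ 173 while the general one is Theorem 10 on p.\ 197, so in the source the logical order is the reverse of yours --- the ``corollary'' is proved first (essentially by the Lax--Milgram/variational argument you sketch as an alternative) and the inhomogeneous case is reduced to it by lifting the boundary data. Either direction of deduction is legitimate; yours buys economy given that the paper has already stated the general theorem.
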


The right inverse of the curl given by Theorem
\ref{theorem_inverse_rot} permits us to invert the composed operator
$\curl f^{-2}\,\curl$, providing of course that this right inverse is
applied to weakly solenoidal fields.  The pair of fields
$(\vec E,\vec H)$ in the following result is constructed explicitly in
terms of the operators defined in this paper.

\begin{teorema}\label{theorem_Max_1}
  Let the domain $\Omega\subseteq\R^3$ be a star-shaped open set, and
  assume that $f^2$ is a continuous proper conductivity in $\Omega$.
  Let $\vec g \in L^2(\Omega,\R^3)$ satisfy $\div\vec g=0$. Then there
  exists a generalized solution $(\vec E,\vec H)$ to the  
  system (\ref{system_4}) and its general form is given by
	\begin{align}\label{solution_double_curl}
   \vec E &= \Tiii[f^2(\vec B + \nabla h)] -
   \vec S_{\Omega}[\Ti[f^2(\vec B + \nabla h)]]+\nabla h_1,\nonumber \\
	 \vec H &= \vec B + \nabla h,
  \end{align}
  where $h_1$ is an arbitrary real valued harmonic function.
\end{teorema}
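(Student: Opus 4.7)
The plan is to decouple the system (\ref{system_4}) into two div-curl problems connected by an inhomogeneous conductivity equation, and to solve them sequentially using results already established in the paper. The vector field $\vec B$ in the statement is, by construction, a particular right-inverse-of-curl of $\vec g$, while $h$ is a scalar corrector that turns $f^2(\vec B+\nabla h)$ into a weakly solenoidal field.

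First, since $\vec g\in L^2(\Omega,\R^3)$ is weakly solenoidal and $\Omega$ is star-shaped, Theorem \ref{theorem_inverse_rot} yields
\[
\vec B = \Tiii[\vec g]-\vec S_\Omega[\Ti[\vec g]],
\]
which satisfies $\curl\vec B=\vec g$ and $\div\vec B=0$ in the weak sense. I would then search for $\vec H$ in the form $\vec H=\vec B+\nabla h$; the third equation of (\ref{system_4}) is automatic because $\curl\nabla h=0$, so only the first equation needs to be enforced. Expanding $\div(f^2\vec H)=0$ and using $\div\vec B=0$, this becomes the inhomogeneous conductivity equation
\[
\div(f^2\,\nabla h)=-\div(f^2\vec B)=-\nabla(f^2)\cdot\vec B
\]
(the last equality in the distributional sense). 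Theorem \ref{theorem_conductivity_nozero}, or Corollary \ref{corollary_conductivity_nozero} with zero boundary data, then produces a generalized solution $h\in H^1(\Omega,\R)$, which fixes $\vec H=\vec B+\nabla h\in L^2(\Omega,\R^3)$ with $\div(f^2\vec H)=0$ and $\curl\vec H=\vec g$ weakly.

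At this point $f^2\vec H$ is a weakly solenoidal field in $L^2(\Omega,\R^3)$, so Corollary \ref{corollary_inverse_rot} applies to the residual homogeneous div-curl system
\[
\div\vec E=0, \qquad \curl\vec E=f^2\vec H,
\]
and gives the general solution
\[
\vec E=\Tiii[f^2(\vec B+\nabla h)]-\vec S_\Omega[\Ti[f^2(\vec B+\nabla h)]]+\nabla h_1
\]
for an arbitrary $h_1\in\Har(\Omega,\R)$, matching (\ref{solution_double_curl}). Verifying that all four equations of (\ref{system_4}) are satisfied weakly is then immediate: the first by the choice of $h$, the second from the structure of the right inverse of curl, the third from $\curl\vec B=\vec g$, and the fourth from Corollary \ref{corollary_inverse_rot}. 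The discussion that a difference of two solutions reduces, after an application of Corollary \ref{corollary_inverse_rot}, to a gradient of a harmonic function accounts for the parameter $h_1$ in the general solution.

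The main technical obstacle will be the regularity bookkeeping in the middle step: one must check that $\div(f^2\vec B)$ belongs to a space for which Theorem \ref{theorem_conductivity_nozero} (or its variational counterpart hinted at in the paragraph preceding Theorem \ref{theorem_conductivity_nozero}) produces a weak solution $h\in H^1(\Omega,\R)$. Since $f^2$ is only assumed to be a continuous proper conductivity and $\vec B$ is merely in $H^1(\Omega,\R^3)$, I would phrase this solvability in the standard variational form, testing $\int_\Omega f^2\nabla h\cdot\nabla v\,dx=\int_\Omega f^2\vec B\cdot\nabla v\,dx$ against $v\in H_0^1(\Omega,\R)$ and invoking Lax--Milgram on $H_0^1$, which is the cleanest way to sidestep any additional regularity hypothesis on $f$.
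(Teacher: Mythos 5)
Your proposal follows essentially the same route as the paper's proof: construct $\vec B=\Tiii[\vec g]-\vec S_\Omega[\Ti[\vec g]]$ via the right inverse of curl, determine the corrector $h$ from the inhomogeneous conductivity equation $\div(f^2\nabla h)=-\div(f^2\vec B)$ with zero boundary data, and then apply the right inverse of curl once more to $f^2(\vec B+\nabla h)$ to obtain $\vec E$ up to $\nabla h_1$. Your closing remark about casting the solvability of the $h$-equation in variational form (testing $\int_\Omega f^2\nabla h\cdot\nabla v\,dx=\int_\Omega f^2\vec B\cdot\nabla v\,dx$ and using Lax--Milgram) is in fact slightly cleaner than the paper's appeal to Corollary \ref{corollary_conductivity_nozero}, which formally requires $\nabla f^2\cdot\vec B\in L^2(\Omega,\R)$ even though $f^2$ is only assumed continuous.
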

 
\begin{proof}
  Since $\div\vec g=0$, by Corollary \ref{corollary_inverse_rot} the
  vector field
\begin{align*}
  \vec B = \Tiii[\vec g ] - \vec S_{\Omega}[\Ti[\vec g ]] 
\end{align*}
satisfies $\curl\vec B=\vec g$ and $\div\vec B=0$ weakly. To solve
\begin{align}\label{rotE}
  \curl \vec E = f^2(\vec B + \nabla h),
\end{align}
we must find an $\R-$valued function $h$  such that
\begin{align*}
 \div (f^2(\vec B + \nabla h))=0.
\end{align*}
Since $\div(f^2\vec B)=\nabla f^2 \cdot \vec B$, we need to solve
the inhomogeneous conductivity equation
\begin{align}  \label{inho_cond_eq}
 \div (f^2\nabla h)  = -\nabla f^2 \cdot \vec B,
\end{align}
It is no loss of generality to take the boundary condition
$h|_{\partial\Omega}=0$ in (\ref{eq:conductivity}). By Corollary
\ref{corollary_conductivity_nozero}, this determines a unique
generalized solution of (\ref{inho_cond_eq}) provided that
$\nabla f^2 \cdot \vec B\in L^2(\Omega,\R)$. But since
$T_{\Omega}\colon L^2(\Omega,\H)\to H^1(\Omega,\H)$ is bounded
\cite[Theorem 8.4]{GuHaSpr2008}, in fact
$\Ti[\vec g ]\in H^1(\Omega,\R)$ and
$\Tiii[\vec g ]\in H^1(\Omega,\R^3)$. Combining with the fact that
$\vec S_{\Omega}[\Ti[\vec g]]$ is harmonic by Proposition
\ref{completacion_monogenicas}, we have
$\vec B\in L^2(\Omega,\R^3)$. Thus, the hypothesis is fulfilled, and
the desired $h$ exists. Applying the right inverse of curl to
(\ref{rotE}) we have the solution (\ref{solution_double_curl})
where $h_1$ is an arbitrary harmonic function.
Then $\div\vec E=0$ by Corollary \ref{corollary_inverse_rot}, and
the remaining equations of (\ref{system_4}) are easily verified.
\end{proof}

\subsection{Variational methods for double curl boundary
  value problems\label{subsec:variational}}
In this section we will prove that given
$\vec \varphi\in H^{1/2}(\partial\Omega, \R^3)$ there exists an
extension to the interior of $\Omega$ satisfying the double curl-type
equation (\ref{eq:doublerot}). Let $f\in H^{1/2}(\Omega,\R)$ be a
measurable proper conductivity. Let us define the nonlinear functional
$\varepsilon=\varepsilon_f\colon H^1(\Omega,\R^3)\to\R$ by
\begin{align}\label{operator_rot}
  \varepsilon[\vec W] = \int_{\Omega}f^{-2}\, \curl\vec W\cdot
   \curl\vec W \,dx.
\end{align}
We are interested in proving that for fixed
$\vec\varphi\in H^{1/2}(\partial\Omega,\R^3)$, there exists at least
one element that minimizes $\varepsilon$; we will use the results of the
variational calculus which can be found, for example, in \cite[Ch.\
3]{Dac1989}. Let $X$ be a reflexive Banach space and let
$I\colon X\to\R$. We say that $I$ is weakly lower semicontinuous
(w.l.s.) if $\liminf_{k\to \infty}{I(u_k)}\geq I(u)$ whenever
$u_k\to u$ weakly in $X$.  A functional $I$ is called coercive when
there exist $\alpha>0$ and $\beta\in\R$ such that
$I(u)\geq\alpha \|u\|_X+\beta$ for all $u\in X$.

\begin{proposicion}\label{teorema_inf}(\cite[Ch.\ 3, Th.\ 1.1]{Dac1989})
  Let $X$ be a reflexive Banach space and let $I\colon X\to\R$ be a
  w.l.s.\ and coercive functional. Then there exists at least one
  element $u_0\in X$ such that
\begin{align*}
  I(u_0) = \inf\left\{I(u)\colon\ u\in X\right\}.
\end{align*}
\end{proposicion}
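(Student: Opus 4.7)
The plan is to apply the classical direct method of the calculus of variations, which is a completely standard argument once one has the hypotheses of coercivity, weak lower semicontinuity, and reflexivity in hand.

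First I would observe that coercivity immediately yields a finite lower bound on $I$: from $I(u)\geq \alpha\|u\|_X+\beta\geq \beta$ for all $u\in X$, we have $m:=\inf\{I(u)\colon u\in X\}\geq \beta>-\infty$. Therefore there exists a minimizing sequence $\{u_k\}\subseteq X$ with $I(u_k)\to m$. Next, I would use coercivity once more to control the norms of the $u_k$: since $I(u_k)$ is bounded (being convergent), the inequality $\alpha\|u_k\|_X \leq I(u_k)-\beta$ shows that $\{\|u_k\|_X\}$ is a bounded sequence of real numbers.

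Then I would invoke reflexivity of $X$: by the Eberlein--\v{S}mulian theorem, bounded sequences in a reflexive Banach space have weakly convergent subsequences. Thus, passing to a subsequence (still denoted $\{u_k\}$), there exists $u_0\in X$ with $u_k\rightharpoonup u_0$ weakly in $X$. The weak lower semicontinuity hypothesis then gives
\[
I(u_0)\leq \liminf_{k\to\infty} I(u_k)=m.
\]
Combined with the trivial inequality $I(u_0)\geq m$ (since $u_0\in X$), this forces $I(u_0)=m$, which is the conclusion.

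The proof has essentially no obstacle beyond recognizing which hypothesis plays which role; the coercivity is used twice (once to ensure the infimum is finite, once to produce a bounded minimizing sequence), reflexivity is used to extract a weakly convergent subsequence, and weak lower semicontinuity is used to pass the infimum through the limit. The only subtlety worth flagging is the need to distinguish weak from strong convergence: a minimizing sequence need not converge in norm, so weak lower semicontinuity (rather than ordinary continuity) is exactly the right hypothesis to close the argument.
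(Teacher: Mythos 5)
Your argument is correct: it is the standard direct method (coercivity bounds the infimum and the minimizing sequence, reflexivity extracts a weakly convergent subsequence via Eberlein--\v{S}mulian, and weak lower semicontinuity passes to the limit). The paper itself gives no proof of this proposition --- it simply cites Dacorogna --- and your proof is essentially the one found in that reference, so there is nothing to reconcile.
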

 
\begin{corolario}\label{corolario_inf}
  Under the hypotheses of Proposition \ref{teorema_inf}, if
  $Y\subseteq X$ is a closed (in the norm of $X$) and convex subset,
  then exists $u_1\in Y$ such that
\begin{align*}
  I(u_1) = \inf \{I(u)\colon\ u\in Y \}.
\end{align*}
\end{corolario}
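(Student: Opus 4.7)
The plan is to imitate the direct method used in Proposition \ref{teorema_inf}, restricting the argument to $Y$, with Mazur's theorem supplying the crucial closure property under weak limits.

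First I would set $m=\inf\{I(u)\colon u\in Y\}$ and choose a minimizing sequence $\{u_k\}\subseteq Y$ with $I(u_k)\to m$. Since the sequence of values $I(u_k)$ is eventually bounded above (by, say, $m+1$), the coercivity estimate $I(u)\geq\alpha\|u\|_X+\beta$ would force $\{u_k\}$ to be bounded in $X$.

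Next, because $X$ is reflexive, bounded sequences admit weakly convergent subsequences, so I would pass to a subsequence $u_{k_j}\rightharpoonup u_1$ in $X$. The crux is to check that the weak limit still lies in $Y$. This is where the convexity hypothesis becomes indispensable: by Mazur's theorem every norm-closed convex subset of a Banach space is also sequentially weakly closed, which gives $u_1\in Y$.

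Finally, the weak lower semicontinuity of $I$ would yield $I(u_1)\leq\liminf_{j\to\infty}I(u_{k_j})=m$, while the reverse inequality $I(u_1)\geq m$ holds by the definition of $m$ together with $u_1\in Y$. Hence $I(u_1)=m$, as required. The only step requiring more than a cosmetic modification of the proof of Proposition \ref{teorema_inf} is establishing the weak closedness of $Y$; without convexity this would typically fail, so invoking Mazur's theorem is the essential new ingredient and, to the extent there is one, the main obstacle.
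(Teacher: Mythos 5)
Your proof is correct and is precisely the standard argument the paper intends (it omits the proof of this corollary entirely): run the direct method of Proposition \ref{teorema_inf} on a minimizing sequence in $Y$, use coercivity and reflexivity to extract a weak limit, and invoke Mazur's theorem to conclude that the norm-closed convex set $Y$ is weakly sequentially closed, so the limit stays in $Y$ and weak lower semicontinuity finishes the job. You correctly identify Mazur's theorem as the only ingredient beyond the proposition itself; the remaining trivia (that the infimum is finite by coercivity, and that $Y$ must be nonempty for the statement to have content) are implicit and harmless.
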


We apply these facts to the reflexive Banach space
$X=H^1(\Omega,\R^3)$, and the functional $I=\varepsilon$ of
(\ref{operator_rot}), with $Y\subseteq X$ defined as follows:
\begin{align*}
  Y = \{\vec W\in H^1(\Omega,\R^3)\colon\
   \vec W|_{\partial\Omega}=\vec\varphi \}.
\end{align*}

\begin{proposicion}\label{prop:hypotheses}
  $Y\subseteq X$ and $\varepsilon$ satisfy the hypothesis of Corollary
  \ref{corolario_inf}: (a) $Y$ is convex; (b) $Y$ is closed; (c)
  $\varepsilon$ is coercive; (d) $\varepsilon$ is w.l.s.
\end{proposicion}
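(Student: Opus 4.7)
The statement breaks into four independent sub-claims, and my plan is to address each in turn. Parts (a) and (b) should be formal consequences of the continuity and linearity of the trace operator, part (d) should follow from general principles for quadratic functionals, while part (c) is where I expect the real work to lie.

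For (a), given $\vec W_1,\vec W_2\in Y$ and $t\in[0,1]$, linearity of the trace map $\gamma\colon H^1(\Omega,\R^3)\to H^{1/2}(\partial\Omega,\R^3)$ yields $\gamma(t\vec W_1+(1-t)\vec W_2)=t\vec\varphi+(1-t)\vec\varphi=\vec\varphi$, so the convex combination lies in $Y$. For (b), since $\gamma$ is continuous, $Y=\gamma^{-1}(\{\vec\varphi\})$ is the preimage of a closed singleton, hence closed in the $H^1$-norm.

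For (d), I would exploit that $\curl\colon H^1(\Omega,\R^3)\to L^2(\Omega,\R^3)$ is a bounded linear operator, hence weakly continuous. If $\vec W_k\rightharpoonup\vec W$ weakly in $H^1$, then $\curl\vec W_k\rightharpoonup\curl\vec W$ weakly in $L^2$. The functional $\vec u\mapsto\int_\Omega f^{-2}|\vec u|^2\,dx$ is a continuous convex quadratic form on $L^2(\Omega,\R^3)$ (here using that $f^{-2}$ is bounded above, since the properness of $f^2$ gives both $f$ and $1/f$ bounded), and such forms are weakly lower semicontinuous on $L^2$. Composing, $\varepsilon$ is w.l.s.\ on $H^1$.

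The main obstacle is part (c). The properness of $f^2$ gives a pointwise bound $f^{-2}\geq c_0>0$, which immediately produces
$$\varepsilon[\vec W]\geq c_0\|\curl\vec W\|_{L^2(\Omega)}^2.$$
However, $\|\curl\vec W\|_{L^2}$ by itself does not control $\|\vec W\|_{H^1(\Omega)}$, since gradients lie in the kernel of $\curl$. To bridge this gap I would use that on the affine set $Y$ the boundary trace is fixed: writing $\vec W=\vec\Phi+\vec v$ with $\vec\Phi$ a fixed $H^1$-extension of $\vec\varphi$ and $\vec v\in H^1_0(\Omega,\R^3)$, I would invoke a div-curl Friedrichs-type estimate of the form $\|\vec v\|_{H^1}^2\leq C(\|\curl\vec v\|_{L^2}^2+\|\div\vec v\|_{L^2}^2)$ combined with a gauge reduction that absorbs the gradient component into the fixed extension without changing $\varepsilon$. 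This should yield a bound $\varepsilon[\vec W]\geq\alpha\|\vec W\|_{H^1}+\beta$ for suitable constants. Making this gauge step precise, so that the resulting coercivity genuinely holds on all of $Y$ and not merely on a divergence-free slice, is the sole subtle point; the remaining three claims are essentially formal.
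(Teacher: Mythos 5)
Your parts (a), (b) and (d) are correct and coincide with the paper's own argument: convexity and closedness of $Y$ follow from linearity and continuity of the trace operator, and weak lower semicontinuity follows from the weak continuity of $\curl\colon H^1(\Omega,\R^3)\to L^2(\Omega,\R^3)$ together with the w.l.s.\ of the continuous convex quadratic form $\vec u\mapsto\int_\Omega f^{-2}|\vec u|^2\,dx$ on $L^2$ (the paper phrases this via the w.l.s.\ of the norm, but it is the same argument).

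The gap is in (c), and you have put your finger on exactly the right spot --- but the repair you sketch cannot close it, because claim (c) is false as literally stated with the paper's definition of coercivity. The paper's entire proof of (c) is the identity $\varepsilon[\vec W]=\|f^{-1}\curl\vec W\|_{L^2(\Omega)}^2$, which controls $\|\curl\vec W\|_{L^2}$ but not $\|\vec W\|_{H^1(\Omega)}$. Your observation that gradients lie in the kernel of $\curl$ is decisive: for any $h\in C_0^\infty(\Omega,\R)$ one has $\vec W+\nabla h\in Y$ and $\varepsilon[\vec W+\nabla h]=\varepsilon[\vec W]$, while $\|\vec W+\nabla h\|_{H^1}$ can be made arbitrarily large, so no bound $\varepsilon\geq\alpha\|\cdot\|_{H^1}+\beta$ with $\alpha>0$ can hold on $Y$, let alone on $X$. (The paper itself concedes this gauge invariance when it remarks after Theorem \ref{theorem_double_curl} that the minimizer is not unique.) Consequently your proposed ``gauge reduction'' cannot yield coercivity on all of $Y$; what it can do is reduce the minimization to a gauge-fixed closed convex subset (for instance, fields in $Y$ that are weakly divergence free), on which a Friedrichs--Poincar\'e-type inequality does give coercivity, after which the invariance $\varepsilon[\vec W+\nabla h]=\varepsilon[\vec W]$ shows that a minimizer there also minimizes over $Y$. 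That salvages Theorem \ref{theorem_double_curl}, but it proves a statement different from Proposition \ref{prop:hypotheses}(c). In short: your diagnosis is correct and sharper than the paper's one-line justification, but the step you flag as ``the sole subtle point'' is not merely subtle --- as formulated, it cannot be carried out, and the proposition itself needs to be restated.
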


\begin{proof}
  (a) is immediate. To prove (b), let $\{\vec W_k\}\subseteq Y$ with
  $\vec W_k\to\vec W$; that is,
  $\|\vec W_k-\vec W\|_{H^1(\Omega)}\to0$ as $k\to\infty$.  By the
  Trace Theorem in Sobolev spaces \cite{Fournier1978} we have $C>0$
  such that
\begin{align*}
  \|\vec W_k|_{\partial\Omega}-\vec W|_{\partial\Omega}\|_{H^{1/2}(\partial\Omega)}\leq
   C\|\vec W_k - \vec W\|_{H^1(\Omega)}
\end{align*}
for all $k$.  And since $\vec W_k|_{\partial\Omega}=\vec \varphi$, then
$\vec W|_{\partial\Omega}=\vec \varphi$ almost everywhere in
$\partial\Omega$.
By definition,
\begin{align*}
 \varepsilon[\vec W] =\| f^{-1}\,\curl \vec W \|_{L^2(\Omega)}^2.
\end{align*}
so we have (c). For (d), since the norm in any Banach space is is
w.l.s., we need to prove that if $\vec W_k\to\vec W$ weakly in
$H^1(\Omega,\R^3)$, then $\curl\vec W_k\to\curl\vec W$ weakly in $L^2(\Omega,\R^3)$.
But this holds because
$\partial\vec W_k/\partial x_i\to\partial\vec W/\partial x_i$ weakly
in $L^2(\Omega,\R^3)$ ($i=1,2,3$), and because the curl is a combination of
elements of $\partial/\partial x_i$.
\end{proof}

\begin{teorema}\label{theorem_double_curl}
  Let $\Omega\subseteq\R^3$ be a bounded domain with sufficiently
  smooth boundary, and let $f^2$ be a measurable proper conductivity.
  Then given the boundary values
  $\vec\varphi\in H^{1/2}(\partial\Omega,\R^3)$ there exists an
  extension $\vec W\in H^1(\Omega,\R^3)$ such that
\begin{align} 
  \curl\left(f^{-2}\,\curl\vec W\right) &=0, \nonumber  \\
  \vec W|_{\partial\Omega} &= \vec\varphi. \label{ecuacion_rotacional}
\end{align}
\end{teorema}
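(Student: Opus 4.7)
The plan is to apply the variational machinery already assembled in Proposition \ref{prop:hypotheses}. First I would invoke the fact that $H^{1/2}(\partial\Omega,\R^3)$ is defined as the trace space of $H^1(\Omega,\R^3)$, so the admissible set $Y = \{\vec W \in H^1(\Omega,\R^3) : \vec W|_{\partial\Omega} = \vec\varphi\}$ is nonempty. By Proposition \ref{prop:hypotheses}, $Y$ is a closed convex subset of the reflexive Banach space $X = H^1(\Omega,\R^3)$ and $\varepsilon$ is weakly lower semicontinuous and coercive on it. Thus Corollary \ref{corolario_inf} provides a minimizer $\vec W \in Y$ of $\varepsilon$ over $Y$, and by construction this minimizer satisfies the boundary condition $\vec W|_{\partial\Omega} = \vec\varphi$.

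Next I would show that any minimizer satisfies the PDE weakly by computing the first variation. For every test field $\vec\psi \in H_0^1(\Omega,\R^3)$ and every $t \in \R$, the perturbation $\vec W + t\vec\psi$ lies in $Y$ since $\vec\psi$ vanishes on $\partial\Omega$. The function $t \mapsto \varepsilon[\vec W + t\vec\psi]$ is a quadratic polynomial in $t$ with a minimum at $t=0$, so its derivative at $t=0$ must vanish. Expanding,
\begin{align*}
 0 = \frac{d}{dt}\varepsilon[\vec W + t\vec\psi]\bigg|_{t=0}
   = 2\int_\Omega f^{-2}\,\curl \vec W \cdot \curl \vec\psi \,dx.
\end{align*}
This holds for every $\vec\psi \in H_0^1(\Omega,\R^3)$, which is precisely the weak formulation of $\curl(f^{-2}\,\curl \vec W) = 0$ in $\Omega$.

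The main (minor) obstacle is justifying that the integrand is actually integrable and that the variation is legitimate: since $f^2$ is a proper conductivity, $f^{-2}$ is bounded, so $f^{-2}\curl\vec W \in L^2(\Omega,\R^3)$ and the bilinear form above is well defined and continuous on $L^2(\Omega,\R^3)\times L^2(\Omega,\R^3)$. Nothing more delicate is required, as the argument is a direct Dirichlet-principle style application of the framework of Section \ref{subsec:variational}.
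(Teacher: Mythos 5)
Your proposal is correct and follows essentially the same route as the paper: minimize the functional $\varepsilon$ over the closed convex affine set $[\vec\varphi]+H_0^1(\Omega,\R^3)$ via Corollary \ref{corolario_inf} and Proposition \ref{prop:hypotheses}, then identify the vanishing of the first variation (G\^ateaux derivative) with the weak form of $\curl(f^{-2}\curl\vec W)=0$. Your added remarks that $Y$ is nonempty by the definition of the trace space and that boundedness of $f^{-2}$ makes the quadratic form well defined are small but welcome clarifications of points the paper leaves implicit.
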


\begin{proof}
  By Corollary \ref{corolario_inf} and Proposition
  \ref{prop:hypotheses}, the nonlinear functional
  (\ref{operator_rot}) has a minimum $\vec W$ over
  $[\vec{\varphi}]+H_0^1(\Omega,\R^3)$.  By definition, the second
  equation of the system (\ref{ecuacion_rotacional}) holds. To prove
  the first one, from the integration by parts formula for Sobolev spaces
  
  we have that
\begin{align}\label{rotacional_debil}
   \langle \curl f^{-2}\,\curl\vec W ,\; \vec v \rangle =
  \int_\Omega f^{-2}\,\curl\vec W \cdot \curl\vec v \,dx 
\end{align}
when $\vec v\in H_0^1(\Omega,\R^3)$. The G\^{a}teaux derivative of $\varepsilon$ at
$\vec w$ in the direction $\vec v$ is
\begin{align*}
  \varepsilon^\prime_{\vec v}[\vec W] &=
     \lim_{t\to 0} \frac{\varepsilon[\vec W+t\vec v]-\varepsilon[\vec W]}{t} \\
  &= \lim_{t\to 0} \frac{1}{t} \int_\Omega 2t f^{-2}\curl\vec v\cdot
       \curl\vec W+t^2 f^{-2}\curl\vec v \cdot \curl\vec v \,dx\\
  & = 2\int_\Omega f^{-2}\curl\vec     v \cdot \curl\vec W \,dx.
\end{align*}
Since $\vec W$ is an extreme point for
$\varepsilon$, the integral vanishes, by (\ref{rotacional_debil}) the first equation of
(\ref{ecuacion_rotacional}) holds in the distributional sense.
\end{proof}

The minimum  is not unique, because
$\varepsilon[\vec W] = \varepsilon[\vec W + \grad h]$ when
$h\in H_0^1(\Omega,\R)$.

\medskip
\noindent\textit{Further applications.}
The solution  is
analogous for the inhomogeneous counterpart of the double
curl-type equation, that is, with 
\begin{align*}
   \curl (f^{-2}\,\curl\vec W) = \vec g 
\end{align*}
in place of the first equation of (\ref{ecuacion_rotacional}). In this
case the functional to minimize is
\begin{align*}
  \varepsilon[\vec W] &= \int_\Omega f^{-2}\,\curl\vec W\cdot \curl\vec W \,dx -2
                      \int_\Omega\vec g \cdot \vec W \,dx,
\end{align*}
where $\vec g \in L^2(\Omega,\R^3)$ and $\div\vec g=0$ weakly.

Similarly, we can find weak solutions for the inhomogeneous
conductivity equation $\div f^2\nabla W_0 = g_0$ (cf.\ Proposition
\ref{ec_conductividad} and \ Theorem
\ref{theorem_conductivity_nozero}). Now the functional to minimize is
\begin{align*}
  \varepsilon[W_0] &= \int_\Omega f^2\,\nabla W_0\cdot \nabla W_0 \,dx +2
                      \int_\Omega g_0  W_0 \,dx,
\end{align*}
given $g_0\in L^2(\Omega,\R)$.

Consequently, we can relax the hypotheses of Theorem
\ref{theorem_Max_1} to assume that the proper conductivity $f^2$ is
measurable. As a result, in this generality there exists a pair
$(\vec E,\vec H)$ satisfying the differential system (\ref{system_4})
in the distributional sense, and satisfying additionally the boundary
condition
\[ \vec E|_{\partial\Omega} = \vec{\varphi}\in H^{1/2}(\partial\Omega,\R^3).\]

\end{document}